\newtheorem{thm}{Theorem}[section]
\newtheorem{lem}[thm]{Lemma}
\newtheorem{cor}[thm]{Corollary}
\theoremstyle{definition}
\newtheorem{defn}[thm]{Definition}
\theoremstyle{remark}
\newcommand{\mc}{\mathcal}
\newcommand{\<}{\langle}
\renewcommand{\>}{\rangle}
\newcommand{\ol}{\overline}
\newcommand{\ts}{\textsc}
\newcommand{\C}{\mathbb{C}}
\newcommand{\R}{\mathbb{R}}
\author{John Harding}
\address{New Mexico State University, Las Cruces NM 88003}
\email{jharding@nmsu.edu}
\author{Remi Salinas Schmeis}
\address{New Mexico State University, Las Cruces NM 88003}
\email{remis@nmsu.edu}
\thanks{The first listed author was partially supported by US Army grant W911NF-21-1-0247 and NSF grant DMS-2231414. The second listed author was partially supported by NSF grant DMS-2231414. }
\begin{document}

\title{Remarks on orthogonality spaces}

\begin{abstract}
We provide two results. The first gives a finite graph constructed from consideration of mutually unbiased bases that occurs as a subgraph of the orthogonality space of $\mathbb{C}^3$ but not of that of $\mathbb{R}^3$. The second is a companion result to the result of Tau and Tserunyan \cite{Tau} that every countable graph occurs as an induced subgraph of the orthogonality space of a Hilbert space. We show that every finite graph occurs as an induced subgraph of the orthogonality space of a finite orthomodular lattice and that every graph occurs as an induced subgraph of the orthogonality space of some atomic orthomodular lattice. 
  
\par
\vspace{.2cm}
\noindent \textbf{Mathematics Subject Classification (2020):} 06C15; 81P05; 81P10 
\par
\vspace{.2cm}
\noindent \textbf{Keywords:} Orthomodular poset, orthomodular lattice, orthogonality space, mutually unbiased bases, 
subgraph, graph embedding
\end{abstract}

\maketitle
\maketitle

\section{Introduction}

Orthomodular posets (\ts{omp}s) \cite{ptak} are a certain type of bounded poset with complementarity involution arising as models of the propositions of a quantum system. Orthomodular lattices (\ts{oml}s) \cite{Kalmbach} are \ts{omp}s that are lattices. The motivating example is the projection lattice $\mc{P}(\mc{H})$ of a Hilbert space. Each atomic orthomodular poset $P$ yields a symmetric loopless graph called its  orthogonality space $(X,\perp)$ where $X$ is the set of atoms of $P$ and $\perp$ is the orthogonality relation between atoms. For a Hilbert space $\mc{H}$, the orthogonality space associated to $\mc{P}(\mc{H})$ is given by the 1-dimensional subspaces of $\mc{H}$ with orthogonality being usual orthogonality of subspaces. This orthogonality space represents the pure states of a quantum system represented by $\mc{H}$. 

A graph $G$ is a subgraph of a graph $H$ if the vertex and edge sets of $G$ are subsets of those of $H$ and $G$ is an induced subgraph of $H$ if additionally the edge set of $G$ is the restriction of the edge set of $H$ to the vertices of $G$. By a graph embedding $f:G\to H$ we mean a one-one mapping where $x$ adjacent to $y$ implies $f(x)$ is adjacent to $f(y)$, and this embedding is full if additionally $f(x)$ adjacent to $f(y)$ implies that $x$ is adjacent to $y$. So an embedding is a graph isomorphism with a subgraph of $H$ and a full embedding is a graph isomorphism with an induced subgraph of $H$. Our purpose here is to provide two results related to graph embedding and orthogonality spaces of atomic \ts{oml}s.


The first result gives a finite graph that occurs as a subgraph of the orthogonality space of the \ts{oml} $\mc{P}(\mathbb{C}^3)$ but does not occur in the orthogonality space of $\mc{P}(\mathbb{R}^3)$. That there is such a finite graph is not a surprise, the orthogonality space of $\mc{P}(\C^3)$ determines the order structure of $\mc{P}(\C^3)$ in a simple way, and well-known coordinatization theorems of projective geometry \cite{Dilworth} allow one to reconstruct the field $\C$ from the order structure of $\mc{P}(\C^3)$ via the coordinatization theorem. Thus, one can find a configuration showing that $-1$ has a square root in $\mc{P}(\C^3)$, but of course not in $\mc{P}(\R^3)$. However, construction of such a configuration and its expression in the orthogonality space setting will be somewhat complicated. We provide a relatively simple configuration of atoms, expressed in the language of Greechie diagrams, that characterizes when one ray $a$ of $\R^3$ or $\C^3$ is unbiased with respect to three rays $b,c,d$ that arise from an orthonormal basis, that is, when the angles between $a$ and $b,c,d$ are equal. Using the fact that $\mathbb{C}^3$ has a pair of mutually unbiased bases while $\R^3$ does not, we obtain a configuration that exists in the orthogonality space of $\mc{P}(\C^3)$ but not in that of $\mc{P}(\R^3)$. 

The second result deals with full graph embeddings of graphs into orthogonality graphs of atomic \ts{oml}s. Tau and Tserunyan \cite{Tau} proved that every at most countable graph can be fully embedded into the orthogonality space of a separable Hilbert space, and their construction shows that a finite graph with maximum clique size $n$ can be embedded into the orthogonality graph of $\mc{P}(\R^n)$ or $\mc{P}(\C^n)$. They provide an example of an uncountable graph that cannot be fully embedded into the orthogonality graph of $\mc{P}(\mc{H})$ for any Hilbert space $\mc{H}$. We provide a complementary result, that any finite graph can be fully embedded into the orthogonality graph of a finite \ts{oml}. The result of Tau and Tserunyan fully embeds a finite graph into the orthogonality space of a finite height \ts{oml}, but of course not into a finite \ts{oml}. Either our result, or that of Tau and Tserunyan, can be used to show that every graph can be fully embedded into the orthogonality graph of some atomic \ts{oml}. 


\section{The first result}

In this section, we provide a finite graph that can be embedded into the orthogonality graph of the projection lattice $\mc{P}(\mathbb{C}^3)$ but not into that of $\mc{P}(\mathbb{R}^3)$. Orthogonality graphs can be represented in the usual graph-theoretic way using vertices and edges, or they can be represented by ``Greechie diagrams'' \cite{Kalmbach} which do not directly provide the edges, but instead describe the maximal cliques of the graph, which of course is equivalent to describing the edges. In the following, by a Greechie diagram we will mean a collection of vertices and a collection of sets of those vertices that arises as the vertex set and the set of maximal cliques of some graph. 
We find Greechie diagrams more convenient to use in our context than graphs and they are the customary tool when working with orthogonality spaces. For example, in the figure at left below is a Greechie diagram showing a graph with two maximal cliques, which we call blocks, of three elements each, $\{a,b,c\}$ and $\{c,d,e\}$. This same graph is shown in the usual graph-theoretic way in the diagram at right. We note that maximal cliques, or blocks, in our Greechie diagrams are given by points connected with a straight line or arc with no sharp bends. 
\vspace{2ex}

\begin{center}
\begin{tikzpicture}
\fill (90:2) circle(.07);
\fill (135:2) circle(.07);
\fill (180:2) circle(.07);
\draw (90:2) arc(90:180:2);
\fill (1,1) circle(.07);
\fill (2,0) circle(.07);
\draw (0,2)--(2,0);
\node at (180:2.3) {$a$};
\node at (135:2.3) {$b$};
\node at (90:2.3) {$c$};
\node at (1.3,1.3) {$d$};
\node at (2.3,0) {$e$};
\fill (6,1) circle(.07);
\fill (5,0) circle(.07);
\fill (5,2) circle(.07);
\fill (7,0) circle(.07);
\fill (7,2) circle(.07);
\draw (7,2)--(5,0)--(5,2)--(7,0)--(7,2);
\node at (6,1.4) {$c$};
\node at (4.7,0) {$a$};
\node at (4.7,2) {$b$};
\node at (7.3,2) {$d$};
\node at (7.3,0) {$e$};
\end{tikzpicture}    
\end{center}
\vspace{2ex}

Since we are dealing both with $\mathbb{R}^3$ and $\mathbb{C}^3$, it is efficient to use notation that allows us to treat both situation at the same time, so we talk of the conjugate $\ol{z}$ of a real or complex number $z$. When $z$ is real, $\ol{z}$ is simply $z$. 

\begin{defn}
If $u=(u_1,u_2,u_3)$ and $v=(v_1,v_2,v_3)$ are vectors in $\mathbb{R}^3$ or $\mathbb{C}^3$ we define their ``cross product''   
\[
u\times v\,\,=\,\, (\ol{u_2v_3-u_3v_2},\ol{u_3v_1-u_1v_3},\ol{u_1v_2-u_2v_1}).
\]
\end{defn}

For $\mathbb{R}^3$, this is the usual cross product. For $\mathbb{C}^3$, it is known that there is no operation that satisfies all common properties of the usual cross product, however we we do have the following property, which is what we will need. 

\begin{lem} \label{lem: cross product}
Let $u,v$ be linearly independent vectors in $\mathbb{R}^3$ or $\mathbb{C}^3$. Then for $w=u\times v$ we have that $w$ is non-zero vector that is orthogonal to both $u$ and $v$ and $\langle w\rangle$ is the unique 1-dimensional subspace that is orthogonal to both $\langle u\rangle$ and $\langle v\rangle$.     
\end{lem}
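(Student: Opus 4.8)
The plan is to establish the three assertions in the statement---that $w\neq 0$, that $w$ is orthogonal to both $u$ and $v$, and that $\langle w\rangle$ is the unique $1$-dimensional subspace with this property---working throughout with the standard Hermitian inner product $\langle x,y\rangle=\sum_i x_i\ol{y_i}$ on $\C^3$, which restricts to the usual dot product on $\R^3$. The first point to make is conceptual: the conjugates built into the definition of $u\times v$ are present precisely so that the construction is compatible with this Hermitian form rather than with the symmetric bilinear form $\sum_i x_iy_i$. Writing $u\boxtimes v$ for the vector produced by the \emph{ordinary} cross-product formula applied to the (possibly complex) coordinates of $u$ and $v$, we have $u\times v=\ol{u\boxtimes v}$ coordinatewise, and this is the only role the conjugates will play.

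First I would check that $w\neq 0$: since $u$ and $v$ are linearly independent, the $2\times 3$ matrix with rows $u,v$ has rank $2$, so at least one of its $2\times 2$ minors $u_iv_j-u_jv_i$ is non-zero, and conjugation does not change whether a scalar vanishes. Next, orthogonality. Expanding,
\[
\langle w,u\rangle=\sum_i w_i\ol{u_i}=\sum_i\ol{(u\boxtimes v)_i}\,\ol{u_i}=\ol{\sum_i (u\boxtimes v)_i\,u_i}=\ol{u\cdot(u\boxtimes v)},
\]
and $u\cdot(u\boxtimes v)$ is the scalar triple product of $u,u,v$, which vanishes because of the repeated argument; the same computation with $v$ replacing the outer factor shows $\langle w,v\rangle=\ol{v\cdot(u\boxtimes v)}=0$. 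Hence $w$, and therefore $\langle w\rangle$, lies in the orthogonal complement $S^\perp$ of $S=\operatorname{span}\{u,v\}$, which already establishes the existence half of the lemma.

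For uniqueness, note that $\dim S=2$ forces $\dim S^\perp=1$ in the $3$-dimensional ambient space, so the non-zero vector $w\in S^\perp$ satisfies $\langle w\rangle=S^\perp$. A $1$-dimensional subspace $\langle t\rangle$ is orthogonal to both $\langle u\rangle$ and $\langle v\rangle$ exactly when $t\perp u$ and $t\perp v$, i.e.\ when $\langle t\rangle\subseteq S^\perp=\langle w\rangle$, and then $\langle t\rangle=\langle w\rangle$ by equality of dimensions. I do not expect any serious obstacle here: the only steps requiring care are the bookkeeping of the conjugates in the orthogonality computation---where the form of the definition does its work---and the elementary observation that linear independence of $u$ and $v$ makes at least one of the three $2\times 2$ minors non-zero.
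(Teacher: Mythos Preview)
Your proof is correct and follows essentially the same approach as the paper: the paper dismisses the non-vanishing and orthogonality of $w$ as ``routine calculations'' (which you have carried out carefully, correctly tracking the conjugates via the symmetric scalar triple product), and for uniqueness the paper argues by contradiction that two distinct such lines would force $\langle u\rangle^\perp=\langle v\rangle^\perp$, which is the same dimension count $\dim S^\perp=1$ that you invoke directly.
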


\begin{proof}
That $w$ is non-zero and orthogonal to both $u$ and $v$ are routine calculations. For the statement about uniqueness, if $\langle w\rangle$ and $\langle w'\rangle$ were distinct 1-dimensional subspaces orthogonal to both $\langle u\rangle$ and $\langle v\rangle$, then their sum would be a 2-dimensional subspace orthogonal to both $\langle u\rangle$ and $\langle v\rangle$, hence their sum would be equal to $\langle u\rangle^\perp$ and to $\langle v\rangle^\perp$, giving $\langle u\rangle = \langle v\rangle$.     
\end{proof}

We describe some basic properties of the Greechie diagrams of $\mc{P}(\mathbb{R}^3)$ and $\mc{P}(\mathbb{C}^3)$ that are well known \cite{Kalmbach}. First, the diagram of $\mc{P}(\mathbb{R}^3)$ can be embedded into that of $\mc{P}(\mathbb{C}^3)$. For either diagram, all of its blocks consist of 3 pairwise orthogonal 1-dimensional subspaces, in effect, an orthonormal basis up to phase factors. Two blocks are either disjoint or share exactly one atom. An $n$-loop in a Greechie diagram is a sequence $B_1,\ldots,B_n$ of distinct blocks so that any $B_i$ and $B_{i+1}$ share an atom where addition is taken modulo $n$. It is well known, and the proof of uniqueness in Lemma~\ref{lem: cross product} shows, that the Greechie diagrams of $\mc{P}(\mathbb{R}^3)$ and $\mc{P}(\mathbb{C}^3)$ have no loops of order 4 or less.  
Further properties of these Greechie diagrams come from Lemma~\ref{lem: cross product}. Suppose $\langle u\rangle$ and $\langle v\rangle$ are distinct 1-dimensional subspaces, and $w=u\times v$. If $\langle u\rangle$ and $\langle v\rangle$ are orthogonal, then $\langle w\rangle$ is the third element comprising a block with  $\langle u\rangle$ and $\langle v\rangle$, and if $\langle u\rangle$ and $\langle v\rangle$ are not orthogonal, then $\langle w\rangle$ is the unique element in the intersection of two blocks, one containing $\langle u\rangle$ and $\langle w\rangle$ and the other containing $\langle v\rangle$ and $\langle w\rangle$.
\vspace{2ex}

\begin{center}
\begin{tikzpicture}
\fill (-1,0) circle(0.07); 
\fill (0,0) circle(0.07); 
\fill (1,0) circle(0.07); 
\draw (-1,0)--(1,0);
\node at (-1,.45) {$\langle u\rangle$};
\node at (0,.45) {$\langle w\rangle$};
\node at (1,.45) {$\langle v\rangle$};
\fill (4,-.5) circle(0.07); 
\fill (5,0) circle(0.07); 
\fill (6,.5) circle(0.07); 
\fill (7,0) circle(0.07); 
\fill (8,-.5) circle(0.07); 
\draw (4,-.5)--(6,.5)--(8,-.5);
\node at (3.8,-.2) {$\langle u\rangle$};
\node at (8.2,-.2) {$\langle v\rangle$};
\node at (6,.8) {$\langle w\rangle$};
\end{tikzpicture}    
\end{center}
\vspace{2ex}

We are dealing with Greechie diagrams to represent graphs, and say that one Greechie diagram is a sub-diagram of another if the graph associated with the first Greechie diagram is a subgraph of the graph associated with the other. This means that the vertices of the first Greechie diagram are contained in the second, and whenever two vertices are part of a block in the first Greechie diagram, they are part of a block in the second. 

In the following, for a vector such as $v=(x,y,z)$, to improve readability we write $\<x,y,z\>$ for the one-dimensional subspace spanned by $v$ in place of the correct $\<(x,y,z)\>$.

\begin{center}
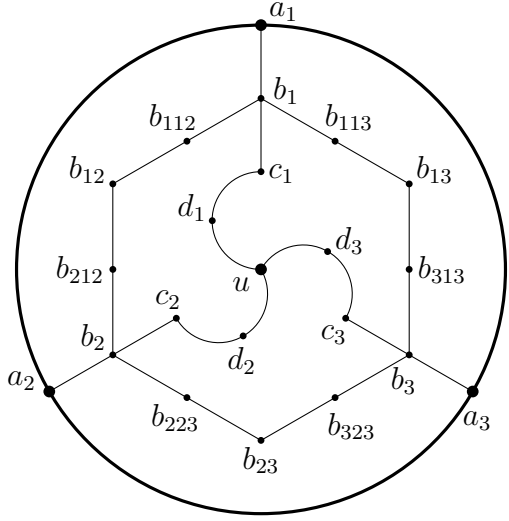
\begin{figure}[h]
\centering
\begin{tikzpicture}[scale=.65]
\draw[very thick] (0,0) circle(5);  
\fill (0:0) circle(.12);
\fill (90:5) circle(.12);
\fill (210:5) circle(.12);
\fill (-30:5) circle(.12);
\fill (90:3.5) circle(.07);
\fill (90:2) circle(.07);
\fill (210:3.5) circle(.07);
\fill (210:2) circle(.07);
\fill (-30:3.5) circle(.07);
\fill (-30:2) circle(.07);
\fill (135:1.41) circle(.07);
\fill (15:1.41) circle(.07);
\fill (255:1.41) circle(.07);
\draw (90:2) arc(90:270:1);
\draw (90:2)--(90:5);
\draw (-30:2) arc(-30:150:1);
\draw (-30:2)--(-30:5);
\draw (210:2) arc(210:390:1);
\draw (210:2)--(210:5);
\node at (-.4,-.3) {$u$};
\node at (85:5.3) {$a_1$};
\node at (82:3.7) {$b_1$};
\node at (78:2) {$c_1$};
\node at (205:5.4) {$a_2$};
\node at (202:3.7) {$b_2$};
\node at (198:2) {$c_2$};
\node at (-35:5.4) {$a_3$};
\node at (-38:3.7) {$b_3$};
\node at (-42:2) {$c_3$};
\fill (150:3.5) circle(.07);
\fill (180:3.03) circle(.07);
\fill (120:3.03) circle(.07);
\fill (60:3.03) circle(.07);
\fill (30:3.5) circle(.07);
\fill (0:3.03) circle(.07);
\fill (-60:3.03) circle(.07);
\fill (270:3.5) circle(.07);
\fill (240:3.03) circle(.07);
\draw (90:3.5)--(150:3.5)--(210:3.5)--(270:3.5)--(-30:3.5)--(30:3.5)--(90:3.5);
\node at (150:4.1) {$b_{12}$};
\node at (30:4.1) {$b_{13}$};
\node at (270:3.9) {$b_{23}$};
\node at (18:1.9) {$d_3$};
\node at (138:1.9) {$d_1$};
\node at (258:1.9) {$d_2$};
\node at (120:3.6) {$b_{112}$};
\node at (60:3.6) {$b_{113}$};
\node at (180:3.7) {$b_{212}$};
\node at (0:3.7) {$b_{313}$};
\node at (240:3.5) {$b_{223}$};
\node at (300:3.7) {$b_{323}$};
\node at (270:5.5) {$ $};
\end{tikzpicture}   
\caption{A partial configuration}
\label{fig1}
\end{figure}
\end{center}

\begin{lem} \label{lem:partial}
For any non-zero $x,y,z$ belonging to $\C$ (respectively $\R)$, the Greechie diagram depicted in Figure~\ref{fig1} occurs as a sub-Greechie diagram of that of the orthogonality space $\mc{P}(\C^3)$ (respectively $\mc{P}(\R^3))$, where the elements of this diagram are given by the following one-dimensional subspaces
\end{lem}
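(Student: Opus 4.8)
The plan is to use the obvious concrete realization of Figure~\ref{fig1} inside $\mc{P}(\C^3)$ (resp.\ $\mc{P}(\R^3)$): set $a_1=\langle 1,0,0\rangle$, $a_2=\langle 0,1,0\rangle$, $a_3=\langle 0,0,1\rangle$ and $u=\langle x,y,z\rangle$, and obtain every remaining vertex of the diagram by iterating the cross product along its edges, namely $c_i=u\times a_i$, $b_i=a_i\times c_i$ and $d_i=u\times c_i$ along the three spokes, then $b_{ij}=b_i\times b_j$ at the corners of the hexagon, and finally the edge vertices $b_i\times b_{ij}$ and $b_j\times b_{ij}$. The hypothesis that $x,y,z$ are all non-zero is exactly what keeps every cross product legitimate for Lemma~\ref{lem: cross product}: it forces $u$ not to be a scalar multiple of any $a_i$, and it makes $\langle b_i,b_j\rangle$ equal to one of $|x|^2,|y|^2,|z|^2$, hence non-zero, so $b_i$ and $b_j$ are linearly independent (that $b_i\neq b_j$ also falls out of the distinctness check below); every other pair entering a cross product consists of orthogonal, hence independent, vectors by construction. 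Thus each subspace in the list is a well-defined $1$-dimensional subspace, orthogonal to the two subspaces that produce it.

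The next step is to read the thirteen blocks off the figure — the basis block $\{a_1,a_2,a_3\}$, the three spoke blocks $\{u,c_i,d_i\}$, the three spoke blocks $\{a_i,b_i,c_i\}$, and the six hexagon blocks, each of the form $\{b_i,\, b_i\times b_{ij},\, b_{ij}\}$ or $\{b_{ij},\, b_j\times b_{ij},\, b_j\}$ — and to check that the three subspaces in each are pairwise orthogonal. For the basis block this is trivial. For each of the other twelve, the third listed subspace is by construction the cross product of the other two, hence orthogonal to both by Lemma~\ref{lem: cross product}, and the one remaining orthogonality, that of the first two, holds because one of them was itself introduced as a cross product having the other as a factor: $c_i=u\times a_i$ is orthogonal to both $u$ and $a_i$, which settles $\{u,c_i,d_i\}$ and $\{a_i,b_i,c_i\}$ at once, and $b_{ij}=b_i\times b_j$ settles the hexagon blocks. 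Since any two distinct orthogonal rays of $\C^3$ (resp.\ $\R^3$) extend to an orthonormal basis and so lie in a common block of the Greechie diagram of $\mc{P}(\C^3)$ (resp.\ $\mc{P}(\R^3)$), vertices sharing a block in Figure~\ref{fig1} share a block there too, which is precisely what the sub-Greechie diagram relation demands.

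The step I expect to be the real obstacle is showing that the subspaces in the list are pairwise distinct, which is needed so that Figure~\ref{fig1} occurs as a genuine sub-Greechie diagram rather than merely as a homomorphic image of one. I would organize this by support, that is, by which coordinates are non-zero: the three $a_i$ have a single non-zero coordinate, the six subspaces $c_i$ and $b_i$ have exactly two, and all the rest ($u$, the $d_i$, the $b_{ij}$, and the six hexagon-edge subspaces) have all three non-zero, so subspaces of different support are automatically distinct and only a short list of same-support comparisons remains. After clearing the scalar of proportionality, each such comparison reduces to an identity forbidden by positive-definiteness of the norm — of the shape $|y|^2+|z|^2=0$, or $|x|^2+|y|^2+|z|^2=0$, or $(|y|^2+|z|^2)(|x|^2+|z|^2)=|x|^2|y|^2$, and the like — so every comparison fails and distinctness holds, with the threefold symmetry in permuting the coordinate axes keeping the bookkeeping small. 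Finally, when $x,y,z$ are real all of these subspaces lie in $\R^3$, so the identical argument gives the statement for $\mc{P}(\R^3)$ as well, completing the proof.
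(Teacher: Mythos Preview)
Your proposal is correct and follows the same overall architecture as the paper: build the configuration by iterated cross products from the standard basis and $u=\langle x,y,z\rangle$, then verify (i) the required orthogonalities and (ii) pairwise distinctness of the $22$ subspaces. Your handling of step~(i) is essentially identical to the paper's.

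Where you diverge is in step~(ii). You propose to sort the subspaces by support --- one, two, or three non-zero coordinates --- and then dispose of the same-support comparisons by showing that proportionality forces an impossible positivity identity such as $|y|^{2}+|z|^{2}=0$. This is legitimate, and the support idea is implicitly present in the paper too (it is how the paper argues that $b_{iij}\notin X$). The paper, however, does \emph{not} carry out the thirteen-element full-support case by direct comparison. Instead it first proves the stronger fact that the orthogonalities within $X=\{a_i,b_i,c_i,d_i,u\}$ are \emph{exactly} those pictured (by computing a handful of inner products such as $\langle b_i,b_j\rangle$), and then repeatedly invokes the structural fact that $\mc{P}(\C^3)$ and $\mc{P}(\R^3)$ contain no loops of length at most~$4$: any hypothetical coincidence between elements of different spokes, or between hexagon vertices, would create such a loop and is therefore impossible. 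What this buys is that the large combinatorial case you flag as ``the real obstacle'' --- the $\binom{13}{2}$ comparisons among full-support subspaces, only partially tamed by the threefold symmetry --- is replaced by a short list of inner-product computations together with a uniform geometric argument. Your route is more elementary and self-contained (it does not use the no-$4$-loop lemma), but it leaves a substantial amount of case-checking implicit; the paper's route is cleaner once the loop lemma is in hand.
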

\vspace{-5ex}

\begin{align*}
a_1&=\<1,0,0\> && b_1=\<0,y,z\> &&  c_1=\<0,\ol{z},-\ol{y}\> && d_1=\<-y\ol{y}-z\ol{z},\ol{x}y,\ol{x}z\> \\
a_2&=\<0,1,0\> && b_2=\<x,0,z\> &&  c_2=\<\ol{z},0,-\ol{x}\> && d_2=\<x\ol{y},-x\ol{x}-z\ol{z},\ol{y}z\>\\
a_3&=\<0,0,1\> && b_3=\<x,y,0\> &&  c_3=\<\ol{y},-\ol{x},0\> && d_3=\<x\ol{z},y\ol{z},-x\ol{x}-y\ol{y}\>
\end{align*}
\vspace{-8ex}

\begin{align*}
b_{12}&=\<\ol{yz},\ol{xz},-\ol{xy}\>&&
b_{112}=\<xy\ol{y}+xz\ol{z},-yz\ol{z},y\ol{y}z\> &&
b_{212}=\<-xz\ol{z},x\ol{x}y+yz\ol{z},x\ol{x}z\> \\
b_{13}&=\<\ol{yz},-\ol{xz},\ol{xy}\>&& 
b_{113}=\<xy\ol{y}+xz\ol{z},yz\ol{z},-y\ol{y}z\>
&&
b_{313}=\<-xy\ol{y},x\ol{x}y,x\ol{x}z+y\ol{y}z\>\\
b_{23}&=\<-\ol{yz},\ol{xz},\ol{xy}\>&& 
b_{223}=\<xz\ol{z},yz\ol{z}+x\ol{x}y,-x\ol{x}z\>&&
b_{323}=\<xy\ol{y},-x\ol{x}y,x\ol{x}z+y\ol{y}z\>
\end{align*}
\vspace{-3ex}

\hspace{2ex} $u=\<x,y,z\>$

\begin{proof}
The same proof holds for the real and complex cases. Since $x,y,z$ are non-zero, $u$ is not orthogonal to any $a_i$ for $i=1,2,3$ and this implies that for a given value of $i$, the elements $u,a_i,b_i,c_i,d_i$ are distinct and the orthogonalities among them are exactly those shown in Figure~\ref{fig1}. Further, the expressions for these elements in terms of $x,y,z$ are the ones given by cross products. If for $i\neq j$ some element of $\{a_i,b_i,c_i,d_i\}$ were equal to some element of $\{a_j,b_j,c_j,d_j\}$ then we would have a loop of order 4 or less, and this is impossible. So all the elements among the collection $a_i,b_i,c_i,d_i$ for $i=1,2,3$ are distinct and the orthogonalities among them are at least the ones shown in the figure. By looking at the expressions for these elements, it is clear that $a_i\not\perp b_j,c_j,d_j$ for $i\neq j$. By computing inner products and noting that none of $x,y,z$ are 0, we see also that $b_i\not\perp b_j,c_j$ for $i\neq j$ since these inner products are all a product of some of the variables $x,y,z$ and their conjugates. Note that $b_i$ cannot be orthogonal to $d_j$ for $i\neq j$ since this would produce a 4-loop. Thus, not only are the elements in the collection $a_i,b_i,c_i,d_i$ for $i=1,2,3$ all distinct, the orthogonalities among them are exactly the ones shown. We let $X=\{a_i,b_i,c_i,d_i\mid i=1,2,3\}$. 

For $i< j$ we have shown that $b_i\not\perp b_j$, so for given $i<j$ the elements $b_i,b_j,b_{ij},b_{iij},b_{jij}$ are distinct and the orthogonalities among them are exactly those shown in the figure. Further, the expressions for them in terms of $x,y,z$ are those given by cross products. For $i<j$, the element $b_{ij}$ does not belong to $X$ since $b_{ij}$ is orthogonal to both $b_i$ and $b_j$ and none of the elements in $X$ have this property. If $b_{iij}$ were to belong to $X$, it could only be one of $a_i$ or $c_i$ since $b_{iij}\perp b_i$ and $a_i,c_i$ are the only elements in $X$ orthogonal to $b_i$. But none of the entries in the description of $b_{iij}$ can be 0, so $b_{iij}$ does not belong to $X$, and a similar argument shows that $b_{jij}$ does not belong to $X$. Finally, none of the elements among the $b_i,b_{ij},b_{iij},b_{jij}$ can be equal to each other since this would produce a loop of order 4 or less from the hexagon in the diagram. Thus all of the elements in this diagram are distinct and all shown orthogonalities in the figure occur. So this figure is a sub-Greechie diagram of the Greechie diagram of the orthogonality space. 
\end{proof}

The proof of the previous lemma comes close to showing that Figure~\ref{fig1} is a full Greechie sub-diagram. The only possible additional orthogonalities that are not shown in the Figure~\ref{fig1} must involve an element of the hexagon. To have an additional orthogonality between two elements of the hexagon, the only possibility that does not produce a 4-loop is between two of the middle points on opposite sides, such as $b_{212}$ and $b_{313}$, but computing such inner products shows that they are always strictly positive real numbers, so this is not possible. So the only possible additional orthogonalities that are not shown in the Figure~\ref{fig1} must involve an element of the hexagon that is not one of the $b_i$ and an element that does not lie on the hexagon. This however can happen, and is the basis of the following definition. 

\begin{defn}
For a block $\<a_1\>,\<a_2\>,\<a_3\>$ of $\mc{P}(\C^3)$ (respectively $\mc{P}(\R^3)$), a one-dimensional subspace $\<u\>$ is a center for this block if the diagram in Figure~2 is a sub-Greechie diagram of the orthogonality space of $\mc{P}(\C^3)$ (respectively $\mc{P}(\R^3)$). \end{defn}

For vectors $u,v$ in $\C^3$ or $\R^3$, the angle $\theta$ between the subspaces $\<u\>$ and $\<v\>$ is given by 
\[
\cos\theta\,\,=\,\, \frac{|\<\,u,v\,\>|}{\|u\|\,\|v\|}.
\]

\noindent A one-dimensional subspace $\<u\>$ is unbiased with respect to a block $\<a_1\>,\<a_2\>,\<a_3\>$ if the angle between $\<u\>$ and $\<a_i\>$ is the same for each $i=1,2,3$. 
\vspace{-5ex}

\begin{center}
\begin{figure}[h]
\centering
\begin{tikzpicture}[scale=.65]
\draw[very thick] (0,0) circle(5);  
\fill (0:0) circle(.12);
\fill (90:5) circle(.12);
\fill (210:5) circle(.12);
\fill (-30:5) circle(.12);
\fill (90:3.5) circle(.07);
\fill (90:2) circle(.07);
\fill (210:3.5) circle(.07);
\fill (210:2) circle(.07);
\fill (-30:3.5) circle(.07);
\fill (-30:2) circle(.07);
\fill (135:1.41) circle(.07);
\fill (15:1.41) circle(.07);
\fill (255:1.41) circle(.07);
\draw (90:2) arc(90:270:1);
\draw (90:2)--(90:5);
\draw (-30:2) arc(-30:150:1);
\draw (-30:2)--(-30:5);
\draw (210:2) arc(210:390:1);
\draw (210:2)--(210:5);
\node at (-.4,-.3) {$u$};
\node at (85:5.3) {$a_1$};
\node at (82:3.7) {$b_1$};
\node at (78:2) {$c_1$};
\node at (205:5.4) {$a_2$};
\node at (202:3.7) {$b_2$};
\node at (198:2) {$c_2$};
\node at (-35:5.4) {$a_3$};
\node at (-38:3.7) {$b_3$};
\node at (-42:2) {$c_3$};
\fill (150:3.5) circle(.07);
\fill (180:3.03) circle(.07);
\fill (120:3.03) circle(.07);
\fill (60:3.03) circle(.07);
\fill (30:3.5) circle(.07);
\fill (0:3.03) circle(.07);
\fill (-60:3.03) circle(.07);
\fill (270:3.5) circle(.07);
\fill (240:3.03) circle(.07);
\draw (90:3.5)--(150:3.5)--(210:3.5)--(270:3.5)--(-30:3.5)--(30:3.5)--(90:3.5);
\node at (150:4.1) {$b_{12}$};
\node at (30:4.1) {$b_{13}$};
\node at (270:3.9) {$b_{23}$};
\node at (18:1.9) {$d_3$};
\node at (138:1.9) {$d_1$};
\node at (258:1.9) {$d_2$};
\node at (120:3.6) {$b_{112}$};
\node at (60:3.6) {$b_{113}$};
\node at (180:3.7) {$b_{212}$};
\node at (0:3.7) {$b_{313}$};
\node at (240:3.5) {$b_{223}$};
\node at (300:3.7) {$b_{323}$};
\node at (270:5.5) {$ $};
\draw[dashed] (150:3.5) to[out=-60,in=180] (-30:2);
\draw[dashed] (30:3.5) to[out=180,in=60] (210:2);
\draw[dashed] (270:3.5) to[out=60,in=-60] (90:2);
\fill (90:.9) circle(.07);
\fill (210:.9) circle(.07);
\fill (-30:.9) circle(.07);
\end{tikzpicture}   
\caption{Diagram for $\<u\>$ to be a center of the block of $\<a_1\>$, $\<a_2\>$, $\<a_3\>$.}
\label{fig2}
\end{figure}
\end{center}

\vspace{-5ex}

\begin{thm}
In either $\mc{P}(\C^3)$ or $\mc{P}(\R^3)$, a one-dimensional subspace $\<u\>$ is a center for a block $\<a_1\>,\<a_2\>,\<a_3\>$ iff $\<u\>$ is unbiased with respect to this block.    
\end{thm}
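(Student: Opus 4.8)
The plan is to fix coordinates in which $\<a_1\>,\<a_2\>,\<a_3\>$ is the standard basis and write $u=\<x,y,z\>$, and to observe that the Greechie diagram of Figure~\ref{fig2} is precisely that of Figure~\ref{fig1} together with three further blocks, one through each of $c_1,c_2,c_3$; reading off the dashed edges, those three blocks say exactly that $b_{12}\perp c_3$, $b_{13}\perp c_2$, $b_{23}\perp c_1$. The computational core is then the three inner products, gotten directly from the formulas of Lemma~\ref{lem:partial},
\[
\<b_{12},c_3\>=\ol{z}(|y|^2-|x|^2),\quad \<b_{13},c_2\>=\ol{y}(|z|^2-|x|^2),\quad \<b_{23},c_1\>=\ol{x}(|z|^2-|y|^2).
\]
Since $x,y,z$ will be nonzero in every case that arises, these three quantities vanish simultaneously iff $|x|=|y|=|z|$; and as $\cos\theta_i$ equals $|x|/\|u\|$, $|y|/\|u\|$, $|z|/\|u\|$ respectively in these coordinates, this is exactly the condition that $\<u\>$ be unbiased with respect to the block. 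So both implications reduce to this one computation once the combinatorics is in place.

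For ``center $\Rightarrow$ unbiased'', suppose Figure~\ref{fig2} embeds as a sub-Greechie diagram with $u\mapsto\<u\>$ and $a_i\mapsto\<a_i\>$. First I would check that $x,y,z$ are all nonzero: if, say, $\<u\>\perp\<a_3\>$, then the images of $a_3,c_3,u$ are pairwise orthogonal, hence a block, and by Lemma~\ref{lem: cross product} the image of $b_3$ would be the unique atom orthogonal to both the images of $a_3$ and $c_3$, namely the image of $u$, contradicting injectivity. With $x,y,z\neq 0$ the embedding is rigid: from the blocks of Figure~\ref{fig1} the image of $c_i$ is orthogonal to $\<a_i\>$ and to $\<u\>$, so by Lemma~\ref{lem: cross product} it equals $\<a_i\times u\>$, which is $c_i$; likewise the image of $b_i$ is $\<a_i\times c_i\>=b_i$ and the image of $b_{ij}$ is $\<b_i\times b_j\>=b_{ij}$. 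The three extra blocks of Figure~\ref{fig2} now say precisely $b_{12}\perp c_3$, $b_{13}\perp c_2$, $b_{23}\perp c_1$ in $\mc{P}(\C^3)$ (or $\mc{P}(\R^3)$), so the displayed identities force $|x|=|y|=|z|$.

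For ``unbiased $\Rightarrow$ center'', if $|x|=|y|=|z|$ then, $\<u\>$ being a subspace, all three are nonzero, so Lemma~\ref{lem:partial} provides the embedding of Figure~\ref{fig1} via the listed subspaces and the displayed identities yield the three extra orthogonalities. Setting $e_3=\<b_{12}\times c_3\>$, $e_2=\<b_{13}\times c_2\>$, $e_1=\<b_{23}\times c_1\>$, Lemma~\ref{lem: cross product} makes each $e_k$ orthogonal to the two atoms defining it, so $\{b_{12},e_3,c_3\}$, $\{b_{13},e_2,c_2\}$, $\{b_{23},e_1,c_1\}$ are blocks --- exactly the three new blocks of Figure~\ref{fig2}. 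It then remains to see that $e_1,e_2,e_3$ are distinct from one another and from the twenty-two subspaces appearing in Figure~\ref{fig1}; evaluating the cross products gives $e_3=\<x,y,2z\>$, $e_2=\<x,2y,z\>$, $e_1=\<2x,y,z\>$, and comparing coordinates (using $x,y,z\neq 0$ and $|x|=|y|=|z|$) rules out every coincidence. Hence Figure~\ref{fig2} is a sub-Greechie diagram and $\<u\>$ is a center.

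I expect the main obstacle to be the rigidity step --- verifying that an arbitrary embedding of Figure~\ref{fig2} must carry $c_i,b_i,b_{ij}$ to the specific subspaces of Lemma~\ref{lem:partial}, which is what allows the three extra blocks to be read as equations in $x,y,z$ --- together with the finite but tedious distinctness check for $e_1,e_2,e_3$. The argument runs verbatim for $\mc{P}(\R^3)$, with conjugation trivial and $|x|^2=x^2$.
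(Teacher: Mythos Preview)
Your proof is correct and follows essentially the same route as the paper's: reduce to the standard basis via a unitary, observe $x,y,z\neq 0$, and compute the three inner products $\langle c_i,b_{jk}\rangle$ to see that the extra orthogonalities of Figure~\ref{fig2} hold iff $|x|=|y|=|z|$. Your computations agree with the paper's up to conjugation.

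You are in fact more careful than the paper on two points it leaves implicit. First, your rigidity argument --- that any embedding of Figure~\ref{fig2} sending $u,a_1,a_2,a_3$ to the given subspaces must send $c_i,b_i,b_{ij}$ to the specific cross-product subspaces of Lemma~\ref{lem:partial} --- is exactly what is needed to read the three dashed blocks as equations in $x,y,z$; the paper simply invokes Lemma~\ref{lem:partial} without spelling this out, relying tacitly on the uniqueness in Lemma~\ref{lem: cross product}. Second, your distinctness check for the three new midpoints $e_1=\langle 2x,y,z\rangle$, $e_2=\langle x,2y,z\rangle$, $e_3=\langle x,y,2z\rangle$ is a genuine requirement for Figure~\ref{fig2} to embed injectively, and the paper's proof omits it entirely. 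Both additions are routine but not vacuous, so your version is the more complete argument.
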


\begin{proof}
Since a unitary transformation preserves angles and there is a unitary transformation taking any block to any other, we may assume without loss of generality that $\<a_1\>=\<1,0,0\>$, $\<a_2\>=\<0,1,0\>$, $\<a_3\>=\<0,0,1\>$ and $\<u\>=\<x,y,z\>$. We may further assume that $\<u\>$ is not orthogonal to any of the $\<a_i\>$ for $i=1,2,3$ since this is the case if either $\<u\>$ is a center of this block or if $\<u\>$ is unbiased with respect to this block. Thus all of $x,y,z$ are non-zero and we can apply Lemma~\ref{lem:partial} to obtain that the configuration of Figure~\ref{fig1} is a sub-Greechie diagram and its elements are as described in terms of $x,y,z$ in that lemma. 

By definition, $\<u\>$ is a center for $\<a_1\>,\<a_2\>,\<a_3\>$ iff Figure~\ref{fig2} occurs as a sub-Greechie diagram, and in view of the fact that Figure~\ref{fig1} is a sub-diagram, this occurs iff $c_1\perp c_{23}$, $c_2\perp b_{13}$ and $c_3\perp b_{12}$. Note that the inner products are given by 
\[
\<\, c_1, b_{23}\,\>\,=\, xz\ol{z}-xy\ol{y}\quad\quad \<\,c_2,b_{13}\,\>\,=\, yz\ol{z}-yx\ol{x}\quad\quad \<\,c_3,b_{12}\,\>\,=\, zy\ol{y}-zx\ol{x}.
\]
Since $x,y,z$ are non-zero, having the indicated orthogonalities for $\<u\>$ to be a center is equivalent to having $|x|=|y|=|z|$, which in turn is equivalent to having $|\<\,a_i,u\,\>|$  for $i=1,2,3$ all be equal, hence to having $\<u\>$ be unbiased with respect to $\<a_1\>,\<a_2\>,\<a_3\>$.
\end{proof}

A collection of orthonormal bases of a finite-dimensional inner product space is said to be mutually orthogonal if the angle between $\<u\>$ and $\<v\>$ is the same whenever $u$ and $v$ come from different bases. It is well known that one can find 4 mutually unbiased bases in $\C^3$, for instance, using $\omega$ for a cube root of unity, the orthonormal bases are the appropriately scaled versions of  
\begin{center}
\begin{tabular}{ccccc}
$(1,0,0)$&&$(0,1,0)$&&$(0,0,1)$\\
$(1,1,1)$&&$(1,\omega,\omega^2)$&&$(1,\omega^2,\omega)$\\
$(1,\omega,\omega)$&&$(1,\omega^2,1)$&&$(1,1,\omega^2)$\\
$(1,\omega^2\omega^2)$&&$(1,\omega,1)$&&$(1,1,\omega)$
\end{tabular}
\end{center}

\noindent It is also well known that there is not even a pair of mutually unbiased bases of $\R^3$. In fact, there are not even two orthogonal vectors in $\R^3$ that are unbiased with respect to the same orthonormal basis. It is enough to show this for the standard basis, and it is clear,  that $u=\<x,y,z\>$ is unbiased with respect to the standard basis iff $|x|=|y|=|z|$ and it is not possible to have two orthogonal non-zero vectors in $\R^3$ satisfy this property since one cannot find an assignment of signs with $\pm 1\pm 1\pm 1=0$. 

\begin{thm}
There is a Greechie diagram that can be embedded into the Greechie diagram of the orthogonality space of $\mc{P}(\C^3)$ but not into that of $\mc{P}(\R^3)$.     
\end{thm}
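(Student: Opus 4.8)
The plan is to assemble a Greechie diagram $D$ from two copies of the ``center'' configuration of Figure~\ref{fig2}, using the preceding theorem, which says that a one-dimensional subspace is a center of a block exactly when it is unbiased with respect to that block. Take $D$ to be obtained by superimposing two copies of Figure~\ref{fig2} --- one over a block $a_1,a_2,a_3$ with center vertex $u$ and its auxiliary vertices, the other over the \emph{same} block $a_1,a_2,a_3$ but with a \emph{different} center vertex $u'$ and its own auxiliary vertices --- the two copies being identified exactly along the block $\{a_1,a_2,a_3\}$, and then adjoining one further block $\{u,u',w\}$ with $w$ a new vertex (one readily checks this is a Greechie diagram). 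The point is that any embedding of $D$ into an orthogonality graph must send $u$ and $u'$ to one-dimensional subspaces that are both centers of the common image block --- hence both unbiased with respect to it --- and that are at the same time orthogonal to one another. The space $\C^3$ admits such a pair because it has a pair of mutually unbiased bases, whereas $\R^3$ does not, by the observation above that $\R^3$ has no two orthogonal vectors unbiased with respect to a single orthonormal basis.

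For the embedding into $\mc{P}(\C^3)$, I would take $\<a_1\>,\<a_2\>,\<a_3\>$ to be the standard basis and set $\<u\>=\<1,1,1\>$ and $\<u'\>=\<1,\omega,\omega^2\>$ with $\omega$ a primitive cube root of unity. These lie in a common orthonormal basis, so $\<u\>\perp\<u'\>$, and each has all coordinates of modulus one, so each is unbiased with respect to the standard basis. By the preceding theorem $\<u\>$ and $\<u'\>$ are each centers of the block $\<a_1\>,\<a_2\>,\<a_3\>$, so Figure~\ref{fig2} occurs as a sub-Greechie diagram of the orthogonality space of $\mc{P}(\C^3)$ realizing the first copy and again realizing the second, and putting $\<w\>=\<u\times u'\>$ realizes the block $\{u,u',w\}$. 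What remains is to check that all the one-dimensional subspaces arising across the two copies, together with $\<w\>$, are pairwise distinct, so that this assignment is injective; this is a finite computation with the explicit cross-product formulas of Lemma~\ref{lem:partial}, collisions inside a single copy being ruled out already by that lemma and collisions across the copies or involving $\<w\>$ handled by comparing coordinates and using $\omega\neq 1$.

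For the non-embedding into $\mc{P}(\R^3)$, suppose $f$ were an embedding of $D$ into its orthogonality graph. Since $\{a_1,a_2,a_3\}$ is a block of $D$, the subspaces $f(\<a_1\>),f(\<a_2\>),f(\<a_3\>)$ are pairwise orthogonal and thus form a block of $\mc{P}(\R^3)$, so after an orthogonal transformation I may assume $f$ sends $a_1,a_2,a_3$ to the standard basis of $\R^3$. A short argument using injectivity of $f$ and the fact that distinct blocks meet in at most one atom shows $f(\<u\>)$ is orthogonal to none of the $f(\<a_i\>)$, so $f(\<u\>)=\<x,y,z\>$ with $x,y,z$ all nonzero. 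Now the images of the remaining vertices of the first copy of Figure~\ref{fig2} are forced: going through them in the order $c_i$, then $b_i$ and $d_i$, then the hexagon vertices $b_{ij},b_{iij},b_{jij}$, each is adjacent in $D$ to two vertices whose images are already determined, and since any two distinct one-dimensional subspaces of $\R^3$ have a unique one-dimensional subspace orthogonal to both (the uniqueness in Lemma~\ref{lem:partial}), $f$ must send each such vertex to precisely the subspace given by the cross-product formulas of Lemma~\ref{lem:partial}. The three dashed edges of Figure~\ref{fig2} are edges of $D$, hence become orthogonal pairs under $f$, and computing the three resulting inner products exactly as in the proof of the preceding theorem forces $|x|=|y|=|z|$; thus $f(\<u\>)$ is unbiased with respect to the standard basis. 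The same argument applied to the second copy makes $f(\<u'\>)$ unbiased with respect to the standard basis as well, while the block $\{u,u',w\}$ forces $f(\<u\>)\perp f(\<u'\>)$. This exhibits two orthogonal one-dimensional subspaces of $\R^3$ unbiased with respect to the same orthonormal basis, which is impossible; so no such $f$ exists, and $D$ has the required properties.

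I expect the main obstacle to be the forcing step in the $\R^3$ argument. Since an embedding need not be induced, it could in principle introduce orthogonalities not present in $D$, and one must be sure that the images of the non-center vertices of each Figure~\ref{fig2} copy are nonetheless completely determined, so that the inner-product identities of the preceding theorem can be applied verbatim. This is exactly what the uniqueness clause of Lemma~\ref{lem:partial} provides: for any two distinct one-dimensional subspaces there is a unique subspace orthogonal to both, whether or not the two are themselves orthogonal, so any spurious extra orthogonalities introduced by $f$ do no harm. The remaining work --- the distinctness bookkeeping on the $\C^3$ side --- is routine, handled by invoking Lemma~\ref{lem:partial} within each copy and then comparing the explicit coordinates of vertices in different copies.
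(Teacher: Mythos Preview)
Your proposal is correct and follows essentially the same construction as the paper: two copies of Figure~\ref{fig2} glued along the block $\{a_1,a_2,a_3\}$, with an added block making the two centers orthogonal, realized in $\mc{P}(\C^3)$ via the standard basis and $\<1,1,1\>,\<1,\omega,\omega^2\>$, and excluded from $\mc{P}(\R^3)$ because no two orthogonal rays can both be unbiased with respect to a common basis. The paper's proof is considerably terser---it simply invokes the preceding theorem to conclude that any embedded center must be unbiased, without re-deriving the forcing of the auxiliary vertices or checking distinctness across the two copies---so your detailed forcing argument via the uniqueness clause of Lemma~\ref{lem: cross product} and your distinctness bookkeeping are more than the paper supplies, but they fill in exactly the steps the paper leaves to the reader.
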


\begin{proof}
Take two copies of the Greechie diagram of Figure~2 and glue them together by identifying the elements on the circular rims, then add a line connecting the two centers to indicate that they are orthogonal. If desired, add a third point to this line so that all blocks have 3 elements, but this is not necessary. The resulting configuration is a sub-Greechie diagram of the orthogonality space of $\mc{P}(\C^3)$ since it occurs with the elements of the circular rim given by the standard basis vectors $(1,0,0), (0,1,0), (0,0,1)$ and with the two centers being two elements from one of the mutually unbiased bases described above, for instance $(1,1,1)$ and $(1,\omega,\omega^2)$. This configuration does not occur as a sub-Greechie diagram of $\mc{P}(\R^3)$ since we noted that two centers of a block in $\R^3$ cannot be orthogonal.   
\end{proof}

\section{The second result} To begin this section, we first briefly describe some of the results of Tau and Tserunyan on embedding graphs into the orthogonality space of a Hilbert space. This is for the convenience of the reader, and also, as far as we know, this result is recorded only on the website \cite{Tau}. Our presentation of the result that we provide is modified from that given in \cite{Tau}.

\begin{thm} \label{thm:Tao}
Let $G$ be a finite graph whose vertices are $v_1,\ldots,v_n$. Then there is a basis $u_1,\ldots,u_n$ of $\mathbb{R}^n$ such that $v_i$ and $v_j$ are adjacent iff $u_i$ and $u_j$ are orthogonal. Furthermore, such a basis can be chosen so that the inner product of $u_i$ and $u_j$ is $\geq 0$ for all $i,j$. 
\end{thm}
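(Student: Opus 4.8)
The plan is to translate everything into a statement about Gram matrices. Recall that a list $u_1,\dots,u_n$ in $\mathbb{R}^n$ is a basis with prescribed inner products $\langle u_i,u_j\rangle = M_{ij}$ if and only if the matrix $M=(M_{ij})$ is symmetric and positive definite: given such an $M$ one factors $M = U^{T}U$ (via Cholesky, or by taking $U$ to be the positive definite square root of $M$), and the columns $u_1,\dots,u_n$ of $U$ form a basis because $\det M = (\det U)^2 > 0$ forces $U$ to be invertible; conversely the Gram matrix of any basis is symmetric positive definite. So I would reduce the theorem to the purely linear-algebraic task of producing a symmetric positive definite matrix $M$ with $M_{ij}=0$ whenever $v_i,v_j$ are adjacent, $M_{ij}>0$ whenever $i\neq j$ and $v_i,v_j$ are non-adjacent, and $M_{ii}>0$ on the diagonal; note that $M_{ij}>0$ (rather than merely $\neq 0$) on non-adjacent pairs is exactly what is needed to get both the "iff" and the nonnegativity clause.

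The candidate I would use is $M = I + \varepsilon A'$, where $A'$ is the adjacency matrix of the complement graph $\overline{G}$ (so $A'_{ij}=1$ precisely when $i\neq j$ and $v_i\not\sim v_j$) and $\varepsilon>0$ is a parameter to be chosen. For every $\varepsilon>0$ this $M$ is symmetric and has exactly the required pattern: $M_{ii}=1$, $M_{ij}=\varepsilon$ on non-adjacent off-diagonal pairs, and $M_{ij}=0$ on adjacent pairs. The only thing left to arrange is positive definiteness. Since $A'$ is real symmetric its eigenvalues are real, and by Gershgorin (each row has $0$ on the diagonal and at most $n-1$ entries equal to $1$) every eigenvalue $\lambda$ of $A'$ satisfies $|\lambda|\le n-1$. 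The eigenvalues of $M=I+\varepsilon A'$ are the numbers $1+\varepsilon\lambda$, so choosing any $\varepsilon$ with $0<\varepsilon<1/(n-1)$ (assuming $n\ge 2$; the case $n=1$ is trivial) makes every eigenvalue of $M$ at least $1-\varepsilon(n-1)>0$. Then $M$ is symmetric positive definite, and factoring $M=U^{T}U$ as above produces the desired basis $u_1,\dots,u_n$ with $v_i\sim v_j \iff \langle u_i,u_j\rangle = 0$ and all inner products nonnegative.

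The argument is elementary, and I do not expect a real obstacle; the one place that needs a moment's care is precisely that the naive matrix encoding the orthogonality pattern need not be positive definite, which is why I insert the small parameter $\varepsilon$ and control the spectrum of the complement's adjacency matrix. That spectral estimate, together with the reduction to Gram matrices, is the crux of the proof, but it is a standard perturbation bound rather than a genuine difficulty.
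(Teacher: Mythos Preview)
Your argument is correct. Reducing to the existence of a symmetric positive definite matrix with the prescribed zero pattern, and then obtaining that matrix as $I+\varepsilon A'$ with $\varepsilon$ small enough to force diagonal dominance, is a clean one-shot construction; the Cholesky (or square-root) factorization $M=U^{T}U$ then yields a genuine basis because $\det U\neq 0$.

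The paper proceeds quite differently: it argues by induction on the number of vertices, embedding the first $n$ vertices via the inductive hypothesis in $\mathbb{R}^n$, then passing to $\mathbb{R}^n\oplus\mathbb{R}^n$ and adjoining a new vector $w_{n+1}=e_1+\cdots+e_n$ while perturbing each earlier $u_i$ to $u_i+e_i$ exactly when $v_i$ is \emph{not} adjacent to $v_{n+1}$; one then restricts to the $(n{+}1)$-dimensional span and identifies it isometrically with $\mathbb{R}^{n+1}$. Your Gram-matrix route is shorter and avoids the ambient-space bookkeeping, while also making the nonnegativity of inner products transparent (they are literally $0$ or $\varepsilon$). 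The paper's inductive construction, on the other hand, dovetails with the way the subsequent countable-graph result is assembled from a chain of finite approximations, though in fact only the statement of the theorem---not the specific inductive mechanism---is used there, so nothing is lost by your approach.
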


\begin{proof}
We proceed by induction. The base case is trivial. Suppose $G$ has $n+1$ vertices and let $u_1,\ldots,u_n$ be a basis of $\mathbb{R}^n$ given by the inductive hypothesis for the subgraph $\{v_1,\ldots,v_n\}$. We will produce vectors $w_1,\ldots,w_{n+1}$ in $\mathbb{R}^n\oplus\mathbb{R}^n$ that span an $n+1$-dimensional subspace and are such that $w_i$ and $w_j$ are orthogonal iff $v_i$ and $v_j$ are adjacent and with the inner product of $w_i$ and $w_j$ positive for all $i,j$. This will prove the theorem. Let $e_1,\ldots,e_n$ be an orthonormal basis of $\mathbb{R}^n$. For $i\leq n$ let $w_i=u_i$ if $v_i$ is adjacent to $v_{n+1}$ and set $w_i=u_i+e_i$ otherwise. Set $w_{n+1}=e_1+\cdots+e_n$. Then for $i,j\leq n+1$  inner product of $w_i$ and $w_j$ is $\geq 0$ with equality iff $v_i$ is adjacent to $v_j$ and a routine argument shows that $w_1,\ldots,w_{n+1}$ are linearly independent. 
\end{proof}

Using this, one can show that every at most countable graph $G$ can be strongly embedded into the orthogonality graph of a separable Hilbert space $\mc{H}$ meaning that there is a one-one mapping $f:G\to \mc{H}$ such that vertices $u$ and $v$ are adjacent in $G$ iff $f(u)$ and $f(v)$ are orthogonal. Indeed, let the vertices of $G$ be enumerated $(v_n)_\mathbb{N}$ and for each $n$ let $u_{1,n},\ldots,u_{n,n}$ be a basis of $\mathbb{R}^n$ as given by the theorem for the subgraph $\{v_1,\ldots,v_n\}$ of $G$. Let $\mc{H}$ be the Hilbert space sum $\bigoplus_\mathbb{N}\mathbb{R}^n$ and for each $k\in\mathbb{N}$ set 
\[
u_k\,\,=\,\,\sum_{n=k}^\infty\, \frac{1}{2^n}u_{k,n}.
\]
\vspace{1ex}

\noindent Since inner products of the chosen basis vectors in each component $\mathbb{R}^n$ are positive it follows that $u_i$ and $u_j$ are orthogonal iff $v_i$ and $v_j$ are adjacent. 
\vspace{1ex}

The main result of Tau and Tserunyan is that there is an uncountable bipartite graph that cannot be strongly embedded into the orthogonality graph of any Hilbert space. We will not reproduce this in detail here since it is not as pertinent to our investigation, but only provide a brief a brief sketch. The key point is the following lemma. 

\begin{lem}[Tau]
If $(u_i)_I$ is an infinite family of unit vectors in a (perhaps non-separable) Hilbert space and the inner product $\langle u_i,u_j\rangle$ takes constant value $\theta$, then $\theta\geq 0$ and there is a vector $v$ and an orthonormal family $(w_i)_I$ all of which are orthogonal to $v$ and with $u_i=v+\sqrt{1-\theta}w_i$ for each $i\in I$.    
\end{lem}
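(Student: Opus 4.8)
The plan is to exploit the constancy of the inner products to extract a common ``center'' vector $v$, after which the residuals become an orthonormal family. Concretely, I would first dispose of the claim $\theta \geq 0$. For any finite subset $i_1,\ldots,i_m$ of $I$, consider the vector $s = u_{i_1} + \cdots + u_{i_m}$. Then $\|s\|^2 = \sum_{k,l}\langle u_{i_k},u_{i_l}\rangle = m + m(m-1)\theta \geq 0$, so $\theta \geq -\frac{1}{m-1}$ for every $m$; letting $m\to\infty$ (using that $I$ is infinite) forces $\theta \geq 0$.

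Next I would guess the form of $v$. If the conclusion holds, then for any two distinct indices the pair $\{u_i,u_j\}$ spans the same plane that contains $v$ and two orthonormal vectors $w_i,w_j$; moreover $v$ should be the ``average direction,'' so I would set $v$ to be the orthogonal projection of (any) $u_{i_0}$ onto the closed subspace one expects all the $u_i$ to share. More robustly: fix $i_0 \neq i_1$ in $I$ and define $v = \frac{1}{2}\bigl(\text{something}\bigr)$ — but the cleanest route is to posit $v = \alpha(u_{i} + u_{j})$ for the right scalar $\alpha$ independent of $i,j$, then check consistency. Working it out: we want $u_i = v + c\,w_i$ with $\|w_i\|=1$, $w_i \perp v$, and $\langle w_i,w_j\rangle = 0$ for $i\neq j$. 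Taking inner products, $1 = \|u_i\|^2 = \|v\|^2 + c^2$ and $\theta = \langle u_i,u_j\rangle = \|v\|^2 + c^2\langle w_i,w_j\rangle = \|v\|^2$ for $i\neq j$. Hence necessarily $\|v\|^2 = \theta$ and $c^2 = 1-\theta$, i.e. $c = \sqrt{1-\theta}$ (here we need $\theta \leq 1$, which holds by Cauchy--Schwarz since the $u_i$ are unit vectors, with equality only in the degenerate case where all $u_i$ coincide and one may take $w_i$ arbitrary orthonormal).

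So the construction is: pick a base index $i_0\in I$ and, for each $i\neq i_0$, the plane $\mathrm{span}\{u_{i_0},u_i\}$ must contain a fixed vector $v$ with $\|v\|^2=\theta$ lying ``between'' them. I would define $v$ directly by Gram--Schmidt relative to $u_{i_0}$: write $u_i = \theta\, u_{i_0} + \sqrt{1-\theta^2}\, n_i$ where $n_i$ is the normalized component of $u_i$ orthogonal to $u_{i_0}$ (valid for $i\neq i_0$, and for $i=i_0$ we have $u_i = u_{i_0}$). One then computes $\langle n_i, n_j\rangle$ for $i\neq j$ (both $\neq i_0$) from $\langle u_i,u_j\rangle = \theta$: this gives $\theta = \theta^2 + (1-\theta^2)\langle n_i,n_j\rangle$, so $\langle n_i, n_j\rangle = \frac{\theta}{1+\theta}$, a new constant $\theta' \in [0,1)$. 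I would then set $v = \sqrt{\theta}\, u_{i_0}$ as a candidate — but this is not orthogonal to the residuals unless I project correctly, so instead I iterate / solve the linear system: seek $v = p\,u_{i_0} + q\,n_{i_1}$ for a fixed auxiliary index $i_1$, chosen so that $v$ is simultaneously the right ``center'' for every pair. The genuinely clean approach is to set $v := \sqrt{1-\theta'}\cdot(\text{common part of the }n_i)$ recursively, but since we only need one layer, the correct closed form is $v = \sqrt{\theta}\,\bigl(\sqrt{\theta'}\,u_{i_0} + \sqrt{1-\theta'}\,n_{i_0}'\bigr)$-type expression; I will pin it down by requiring $\|v\|^2=\theta$, $\langle v, u_i\rangle = \theta$ for all $i$, and then $w_i := \frac{1}{\sqrt{1-\theta}}(u_i - v)$ automatically satisfies $\|w_i\|=1$, $w_i\perp v$, and $\langle w_i,w_j\rangle = 0$, all by the same inner-product bookkeeping.

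The main obstacle is producing the single vector $v$ that works uniformly for all $i\in I$ at once, rather than just for each pair — i.e., showing the ``pairwise centers'' are all equal. I expect to handle this by noting that once $v$ is forced to satisfy $\|v\|^2 = \theta$ and $\langle v, u_i\rangle = \theta$ for every $i$, such a $v$ (if it exists) is unique: if $v, v'$ both satisfy these, then $\|v - v'\|^2 = \|v\|^2 - 2\langle v,v'\rangle + \|v'\|^2$, and $\langle v, v'\rangle = \theta$ follows because $v'$ lies in the closed span of the $u_i$ (being a fixed finite combination) on which $v$ has constant inner product $\theta$ — so $v = v'$. Existence then follows by constructing $v$ in the two-dimensional plane of any fixed pair and checking, via the uniqueness, that it is independent of the pair chosen; equivalently, one can build the orthonormal family $(w_i)_I$ first by Gram--Schmidt-type normalization and verify directly that $u_i - \sqrt{1-\theta}\,w_i$ is independent of $i$. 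I would then conclude by assembling these computations into the stated conclusion, with the edge case $\theta = 1$ (all $u_i$ equal) and $\theta = 0$ (the $u_i$ already orthonormal, $v=0$) noted separately as trivial.
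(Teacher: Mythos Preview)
Your argument for $\theta\geq 0$ is correct, and you have correctly identified the constraints that $v$ must satisfy, namely $\|v\|^2=\theta$ and $\langle v,u_i\rangle=\theta$ for all $i$; once such a $v$ is in hand, the verification that $w_i=(1-\theta)^{-1/2}(u_i-v)$ is an orthonormal family orthogonal to $v$ is exactly the bookkeeping you describe. The gap is in the \emph{existence} of $v$. Your plan to ``construct $v$ in the two-dimensional plane of any fixed pair'' cannot succeed: for $0<\theta<1$ the vector $v$ lies in the closed span of the $u_i$ but not in the linear span of any finite subset of them. Concretely, take $u_i=\sqrt{\theta}\,e_0+\sqrt{1-\theta}\,e_i$ in $\ell^2$ for $i\geq 1$; then $v=\sqrt{\theta}\,e_0$, and any nontrivial finite combination $\sum_{i\in F}c_i u_i$ has a nonzero $e_j$-component for some $j\in F$, so it is never equal to $v$. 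Your Gram--Schmidt recursion (passing from constant inner product $\theta$ to $\theta'=\theta/(1+\theta)$) reflects the same phenomenon: $\theta'>0$ whenever $\theta>0$, so the recursion never terminates. Your uniqueness argument likewise assumes $v'$ is ``a fixed finite combination'' of the $u_i$, which, as the example shows, it is not.

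What is missing is precisely a limiting construction. The paper obtains $v$ as the limit of the net of finite averages $v_F=\frac{1}{|F|}\sum_{i\in F}u_i$ indexed by finite $F\subseteq I$: one computes $\|v_F\|^2=\frac{1+(|F|-1)\theta}{|F|}\to\theta$ and checks the net is Cauchy, after which $\langle v,u_i\rangle=\theta$ follows by continuity. That limit step is the one idea your plan does not supply, and without it the construction of $v$ does not go through.
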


The lemma is established by showing the convergence of the net $(v_F)$ indexed over the finite subsets $F\subseteq I$ where $v_F=\frac{1}{|F|}\sum \{v_i\mid i\in F\}$ is the ``average''. To construct the bipartite graph, one takes a set $\kappa$ of sufficiently large cardinality and constructs a bipartite graph between the sets of vertices $U=\{u_J\}$ where $J$ ranges over the subsets of $\kappa$ and $V=\{v_i\}$ where $i$ ranges over $\kappa$, and then putting an edge between $u_J$ and $v_i$ iff $i\in J$. If $U,V$ are unit vectors in a Hilbert space $\mc{H}$, since there only continuum possibilities for the value of an inner product, Erd\"os-Rano gives an uncountable subset $J\subseteq \kappa$ with all inner product $\langle v_i,v_j\rangle$ taking the same value $\theta$. Partitioning $J$ into two uncountable sets $J_0$ and $J_1$. Then $u_{J_0}$ is orthogonal to each $v_i$ for $i\in J_0$ so is orthogonal to $v$, and $u_{J_0}$ is non-orthogonal to each $v_i=v+\sqrt{1-\theta}w_i$ for $i\in J_1$, giving $u_{J_0}$ is non-orthogonal to all of the uncountably many orthonormal vectors $w_i$ for $i\in J_1$. 

Theorem~\ref{thm:Tao} shows that every finite graph $G$ can be strongly embedded into the orthogonality graph of a finite-height \ts{oml} namely a projection lattice $\mc{P}(\mathbb{R}^n)$ where $n$ is $|G|$. If we broaden our perspective beyond projection lattices we can go farther.

\begin{thm} \label{thm:part 2 main}
Every finite graph can be strongly embedded into the orthogonality graph of a finite \ts{oml}.
\end{thm}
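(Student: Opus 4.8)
The plan is to split the problem into two steps: first realize $G$ as an induced subgraph of the orthogonality space of a finite \ts{omp}, which can be arranged concretely, and then embed that finite \ts{omp} into a finite \ts{oml} without disturbing its atoms or its orthogonality relation.

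For the first step I would use a set representation. Write the vertices of $G$ as $v_1,\dots,v_n$, pass to the complement graph $\overline G$, and let $\Omega=E(\overline G)\cup\{p_1,\dots,p_n\}$ with $p_1,\dots,p_n$ new distinct points; put $A_i=\{\,e\in E(\overline G):v_i\in e\,\}\cup\{p_i\}$. Then each $A_i$ is nonempty and $A_i\cap A_j=\varnothing$ exactly when $v_i$ and $v_j$ are adjacent in $G$. I would then take $P$ to be a finite family of subsets of $\Omega$ containing $\varnothing$, $\Omega$ and each $A_i$, closed under complementation in $\Omega$ and under unions of pairwise disjoint members, chosen so as to be an orthomodular poset --- for instance the family of unions $\bigcup_{i\in S}A_i$ over cliques $S$ of $G$ together with their complements, enlarged if necessary to validate the orthomodular law. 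In such a $P$ every $A_i$ is an atom and $A_i\perp A_j$ iff $A_i\cap A_j=\varnothing$ iff $v_i$ and $v_j$ are adjacent, so $v_i\mapsto A_i$ is a full embedding of $G$ into the orthogonality space of $P$.

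For the second step I would embed the finite \ts{omp} $P$ into a finite \ts{oml} $L$ by an order- and orthocomplement-preserving map carrying atoms to atoms; a Kalmbach-style construction, applied to the finite datum $P$ and arranged to respect the orthocomplementation $P$ already carries, should produce such an $L$. Composing with the first step gives a full embedding of $G$ into the orthogonality space of the finite \ts{oml} $L$. The final assertion, that every graph embeds fully into the orthogonality space of some atomic \ts{oml}, then follows formally by realizing each finite induced subgraph in a finite \ts{oml} as above and passing to a horizontal sum or direct limit.

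The crux is the second step. It is not automatic that a finite \ts{omp} sits inside a finite \ts{oml}: the MacNeille completion of an \ts{omp} can fail to be orthomodular, and a naive completion inside a Boolean algebra would force all atoms to be pairwise orthogonal, destroying the graph. One needs a construction designed to produce a lattice while keeping the object finite and preserving both which elements are atoms and which pairs of atoms are orthogonal. A secondary, more routine, difficulty is pinning down the family $P$ in the first step so that the orthomodular law genuinely holds and no extra orthogonalities among the $A_i$ creep in.
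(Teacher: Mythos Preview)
Your second step is where the argument breaks down. You correctly identify it as the crux and then defer to ``a Kalmbach-style construction \ldots\ arranged to respect the orthocomplementation $P$ already carries'' without actually supplying one. Kalmbach's embedding of a bounded poset into an \ts{oml} neither respects a pre-existing orthocomplementation nor sends atoms to atoms, and there is no off-the-shelf theorem saying that every finite \ts{omp} embeds into a finite \ts{oml} by a map preserving both. In effect you have reduced the target theorem to a statement at least as hard as the theorem itself: the obstacles you list --- MacNeille completions failing orthomodularity, Boolean embeddings forcing all atoms to be orthogonal --- are precisely the obstacles one faces proving the theorem directly. (Your first step also needs tightening: you should check that the closure process inserts nothing strictly between $\varnothing$ and any $A_i$, but that is secondary.)

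The paper avoids this reduction entirely: it never passes through \ts{omp}s, but works inside finite \ts{oml}s from the start and builds the embedding by induction on $|G|$. Given an embedding of $G\setminus\{w\}$ into a finite \ts{oml} $L$, one first passes to $L\times\mc{P}(L)$ to force a separation property (Lemma~\ref{lem:2}), and then uses Kalmbach's \emph{coatom extension} (Lemma~\ref{lem:coatom}) --- a local operation that adjoins a single new atom beneath a prescribed nonzero element of an \ts{oml} --- to insert an atom whose orthogonality to the existing images matches exactly the adjacencies of $w$ (Lemma~\ref{lem: stu}). A further pass of coatom extensions (Lemma~\ref{lem: bigcoatom}) then replaces all image elements by atoms. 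Because the coatom extension operates one atom at a time entirely within the class of finite \ts{oml}s, it sidesteps the \ts{omp}-to-\ts{oml} completion problem you were unable to resolve.
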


To prove Theorem~\ref{thm:part 2 main} we first establish several lemmas. For these, we find it convenient to introduce some notation. For an \ts{oml} $L$ and $a\in L$ we write ${\uparrow}_L\,a$ for $\{b\in L\mid a\leq b\}$. The subscript is because we often consider such upsets in subalgebras of a given \ts{oml}. For elements $x,y$ of an \ts{oml} we write $x\perp y$ to indicate $x,y$ are orthogonal. For elements $u,v$ of a graph $G$ we write $u\perp v$ to indicate that $u,v$ are adjacent. 

\begin{lem} \label{lem:2}
For a finite \ts{oml} $L$ there is a finite \ts{oml} $M$ and a one-one map $g:L\to M$ such that  
\vspace{1ex}

\begin{enumerate}
\item $x\perp y$ $\Leftrightarrow$ $g(x)\perp g(y)$ for all $x,y\in L\setminus\{0\}$,
\item $g(x)\leq\bigvee g(A)$ $\Leftrightarrow$ $x\in A$ for all $x\in L$ and $A\subseteq L$.
\end{enumerate}
\end{lem}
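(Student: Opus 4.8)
The plan is to realize $M$ as a direct product of two finite \ts{oml}s, one factor carrying the orthogonality of $L$ and the other factor forcing the strong set-membership condition (2). Concretely, I would take $M = 2^L \times L$, where $2^L$ denotes the power-set Boolean algebra on the underlying \emph{set} of $L$ (a finite Boolean algebra, hence a finite \ts{oml}, with order $\subseteq$, join $\cup$, and orthocomplement set-complement) and the second factor is $L$ itself. A product of two finite \ts{oml}s is again a finite \ts{oml}, with order, join, and orthocomplement all computed coordinatewise. Define $g\colon L\to M$ by $g(x) = (\{x\}, x)$, so the first coordinate of $g(x)$ is the singleton $\{x\}\in 2^L$ and the second is $x\in L$; this map is visibly one-one.

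For condition (2), the key point is that joins are taken coordinatewise, so for $A\subseteq L$ one has $\bigvee g(A) = (A, \bigvee A)$, where $\bigvee A$ is the join in $L$ and $A=\bigcup_{y\in A}\{y\}$ is the corresponding subset of $L$. Then $g(x)=(\{x\},x)\leq (A,\bigvee A)$ holds iff $\{x\}\subseteq A$ in $2^L$ and $x\leq \bigvee A$ in $L$. Since $\{x\}\subseteq A$ already expresses $x\in A$ (and then $x\leq\bigvee A$ is automatic), this is equivalent to $x\in A$. Thus the first, Boolean, factor converts ``$g(x)$ lies below the join of $g(A)$'' into the exact membership statement $x\in A$, which is precisely (2).

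For condition (1), orthogonality in a product is coordinatewise, so for $x,y\in L\setminus\{0\}$ we have $g(x)\perp g(y)$ iff $\{x\}\perp\{y\}$ in $2^L$ and $x\perp y$ in $L$. In the Boolean factor $\{x\}\perp\{y\}$ just says $x\neq y$, while the second factor contributes exactly $x\perp y$ in $L$. Since for nonzero $x$ the relation $x\perp x$ fails in any \ts{oml} (as $x\leq x^\perp$ would force $x\leq x\wedge x^\perp=0$), the clause $x\neq y$ is automatically implied by $x\perp y$ and adds nothing. Hence $g(x)\perp g(y)$ iff $x\perp y$, giving (1).

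The one real subtlety, and the step I would flag as the main obstacle, is condition (2): it demands the membership relation $x\in A$ rather than the weaker order relation $x\leq\bigvee A$ in $L$. A naive order-embedding into $L$ alone fails here, since in $L$ many joins collapse, for instance $a\vee a'=1$ whenever $a,a'$ are complementary atoms as in $MO_2$, so $x\leq\bigvee A$ can hold with $x\notin A$. The whole role of the power-set factor $2^L$ is to record membership independently of the order of $L$, so that collapsing joins in $L$ can never produce a spurious $g(x)\leq\bigvee g(A)$. Once this design choice is in place, both verifications reduce to the routine coordinatewise calculations above.
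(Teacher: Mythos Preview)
Your proof is correct and is essentially identical to the paper's own argument: the paper also takes $M=L\times\mc{P}(L)$ with $g(x)=(x,\{x\})$, which is your construction with the two factors interchanged. The verifications are the same coordinatewise checks you give, so there is nothing to add.
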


\begin{proof}
Let $\mc{P}(L)$ be the powerset of $L$ and note that this is a Boolean algebra, hence an \ts{oml}. Set $M=L\times\mc{P}(L)$ and note that $M$ is a finite \ts{oml}. Define $g(x)=(x,\{x\})$. If $x,y\in L\setminus\{0\}$, then $x\perp y$ iff $g(x)\perp g(y)$ since distinct atoms $\{x\}$ and $\{y\}$ are orthogonal in any Boolean algebra. Note that $0\perp 0$, yet $g(0)\not\perp g(0)$, so the described qualification is necessary. If $x\in L$ and $A\subseteq L$, then $x\in A$ trivially implies $g(x)\leq\bigvee g(A)$ as each element of a set lies beneath its join. Conversely, if $g(x)=(x,\{x\})$ lies under $\bigvee g(A)=(\bigvee A,A)$, then $\{x\}\subseteq A$, so $x\in A$. 
\end{proof}

For the next step in the construction, we use a technique known as ``Kalmbach's coatom extension'' \cite[p.~310]{Kalmbach}. To describe this, let $L$ be an \ts{oml} and $e\in L\setminus\{0\}$. Then $[0,e']\cup[e,1]$ is a subalgebra of $L$. It is also a subalgebra when $e=0$, but this case would make pathologies in our treatment and we exclude it. Then $([0,e']\cup[e,1])\times 2$ is an \ts{oml} that has a subalgebra $([0,e']\times\{0\})\cup([e,1]\times\{1\})$ that is isomorphic to the subalgebra $[0,e']\cup[e,1]$ of $L$. 
\vspace{2ex}

\begin{center}
\begin{tikzpicture}
\draw[very thick] (0,0) circle(2.5);
\draw[very thick,fill=gray!30] (-.75,-.2) --  (220:2.5) arc(220:275:2.5) -- cycle;
\draw[very thick,fill=gray!30] (.75,.2) --  (40:2.5) arc(40:95:2.5) -- cycle;
\node at (-1.2,-.2) {$e'$};
\node at (1.2,.2) {$e$};
\end{tikzpicture}
\hspace{8ex}
\begin{tikzpicture}[scale=.8]
\draw[very thick,fill=gray!30] (-.75,-.2) --  (220:2.5) arc(220:275:2.5) -- cycle;
\draw[very thick] (.75,.2) --  (40:2.5) arc(40:95:2.5) -- cycle;
\node at (-1.8,-.2) {$(e',0)$};
\node at (1.8,.2) {$(e,0)$};
\node at (0,-2.8) {$ $};
\node at (-1,2.5) {$(1,0)$};
\end{tikzpicture}
\hspace{1ex}
\begin{tikzpicture}[scale=.8]
\draw[very thick] (-.75,-.2) --  (220:2.5) arc(220:275:2.5) -- cycle;
\draw[very thick,fill=gray!30] (.75,.2) --  (40:2.5) arc(40:95:2.5) -- cycle;
\node at (-1.8,-.2) {$(e',1)$};
\node at (1.8,.2) {$(e,1)$};
\node at (0,-3.8) {$ $};
\node at (1,-2.5) {$(0,1)$};
\end{tikzpicture}
\end{center}
\vspace{2ex}

The \ts{oml}s $L$ and $([0,e']\cup[e,1])\times 2$ can be ``pasted'' over the isomorphic subalgebras that are shaded in the figure. The resulting ordering and orthocomplementation are effectively the union of those from the components. This is an instance of the technique known as Greechie's ``paste job'' and found in \cite[p.~306]{Kalmbach} and in more detail in \cite{Greechie}. Note that the element $a=(e,0)$ is an atom under $e\cong(e,1)$. Properties of the coatom extension are described below. 

\begin{lem} \label{lem:coatom}
Let $L$ be a finite \ts{oml} and $e\in L\setminus\{0\}$. Then there is a finite \ts{oml} $M$ with $L\leq M$ and an atom $a\in M$ with 
\vspace{1ex}
\begin{enumerate}
\item $a\not\in L$ and $a<e$,
\item ${\uparrow}_M\,a\cap L={\uparrow}_L\,e={\uparrow}_M\, e$,
\item If $x$ is an atom of $L$ and $x\neq e$, then $x$ is an atom of $M$.
\end{enumerate}
\end{lem}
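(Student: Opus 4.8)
The plan is to take $M$ to be the \ts{oml} produced by Kalmbach's coatom extension exactly as set up in the paragraph preceding the statement: paste $L$ together with $([0,e']\cup[e,1])\times 2$ over the subalgebra $[0,e']\cup[e,1]$ of $L$ and its isomorphic copy $([0,e']\times\{0\})\cup([e,1]\times\{1\})$ inside the product. I would cite \cite[p.~310]{Kalmbach} (and the paste job machinery of \cite{Greechie}) for the facts that $M$ is a finite \ts{oml} and that both $L$ and $([0,e']\cup[e,1])\times 2$ embed into $M$ as subalgebras on which the induced order is the original order. Then set $a=(e,0)$. Everything that remains is an order computation inside the finite lattice $M$, carried out via the standard description of the order of a two‑piece paste job: $u\leq_M v$ iff $u\leq v$ already holds inside one of the two pieces, or there is an element $z$ of the shared subalgebra with $u\leq z\leq v$, each inequality taken inside a piece. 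The one structural remark to record first is that, since $e\neq 0$, we have $e\not\leq e'$, so $[0,e']$ and $[e,1]$ are disjoint; consequently the elements of $M\setminus L$ (the ``new'' elements) are precisely $([0,e']\times\{1\})\cup([e,1]\times\{0\})$.

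I would then verify the three clauses in turn. For (1): $a=(e,0)$ lies in $[e,1]\times\{0\}$, hence is a new element, so $a\notin L$; and $0<(e,0)<(e,1)$ holds in the product, with $(e,1)$ identified with $e$, giving $0<a<e$. Before (2), I would check $a$ is an atom: a short case check shows that the only element of the shared subalgebra below $(e,0)$ is $0$, and no new element sits below $(e,0)$ either, so $[0,a]_M=\{0,a\}$. For (2): using $e\not\leq e'$ one finds the shared elements above $a=(e,0)$ are exactly $[e,1]\times\{1\}$, which is identified with ${\uparrow}_L\,e$; hence for $b\in L$ one has $b\geq_M a$ iff $b\geq_L z$ for some $z\in{\uparrow}_L\,e$, iff $b\geq_L e$, so ${\uparrow}_M\,a\cap L={\uparrow}_L\,e$. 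Since $e=(e,1)$ is itself shared and a direct check shows nothing new lies above it, ${\uparrow}_M\,e={\uparrow}_L\,e$ as well. For (3): let $x$ be an atom of $L$ with $x\neq e$ and suppose $0<b\leq_M x$. If $b\in L$ then $b\leq_L x$, so $b=x$. If $b$ is new, then since $x\in L$ there is a shared $z$ with $0<b\leq z\leq_L x$; as $x$ is an atom, $z=x$, so $x$ is shared; $x\in[e,1]$ would force $x=e$ (using $e\neq 0$), so $x\in[0,e']$ and $x$ is identified with $(x,0)$, whence every element of the product below $(x,0)$ lies in $[0,e']\times\{0\}$ and is shared, contradicting that $b$ is new. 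Thus $b=x$ always, and $x$ is an atom of $M$.

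The computations are all routine once the paste‑job order is in hand; the only point requiring genuine care — and the reason $e=0$ was excluded earlier — is the systematic use of $e\neq 0$ (equivalently $e\not\leq e'$, equivalently the disjointness of $[0,e']$ and $[e,1]$), together with keeping straight which elements of the product piece are new and which are identified with elements of $L$. So the ``main obstacle'' is really just correctly invoking the structure of Kalmbach's coatom extension rather than any hard estimate.
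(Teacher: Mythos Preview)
Your proposal is correct and follows exactly the approach the paper indicates: the paper gives no formal proof of this lemma but simply describes Kalmbach's coatom extension in the preceding paragraph (the paste of $L$ with $([0,e']\cup[e,1])\times 2$ over the shared subalgebra, with $a=(e,0)$) and then states the lemma as recording its properties. Your argument faithfully spells out the routine order computations the paper leaves implicit, including the crucial uses of $e\neq 0$ to ensure $[0,e']\cap[e,1]=\varnothing$.
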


\begin{lem} \label{lem: bigcoatom}
If $L$ is a finite \ts{oml} and $x_1,\ldots,x_n\in L\setminus\{0\}$ are distinct, then there is a finite \ts{oml} $M$ with $L\leq M$ and distinct atoms $a_1,\ldots,a_n\in M\setminus L$ with ${\uparrow}_M\,a_i\cap L={\uparrow}_L\,x_i$ and $a_i\perp a_j$ iff $x_i\perp x_j$ for all $i,j\leq n$.     
\end{lem}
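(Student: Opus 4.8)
The plan is to obtain $M$ by iterating the coatom extension of Lemma~\ref{lem:coatom} $n$ times, once for each element $x_i$, starting from a suitably prepared \ts{oml}. The subtlety is that each application of Lemma~\ref{lem:coatom} adds a \emph{single} new atom $a$ below a chosen element, and we must make sure that (a) applying it successively for $x_1,\ldots,x_n$ does not destroy the atoms produced in earlier steps, (b) the upsets ${\uparrow}_M\,a_i\cap L$ come out to be exactly ${\uparrow}_L\,x_i$, and (c) the orthogonality relations among the $a_i$ match those among the $x_i$. Point (a) is handled by clause (3) of Lemma~\ref{lem:coatom}: once $a_1$ is created as an atom of $M_1$, a later coatom extension over a nonzero element $x_2$ keeps $a_1$ an atom provided $a_1\neq x_2$, which holds since $a_1\notin L$ but $x_2\in L$. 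For point (b) I first need a clean formula: in the pasted \ts{oml}, ${\uparrow}_{M_1}\,a = {\uparrow}_{M_1}\,e$ and intersecting with $L$ gives ${\uparrow}_L\,e$ by clause (2); I should check this persists under further extensions, i.e.\ that passing from $M_1$ to $M_2$ does not enlarge ${\uparrow}\,a_1 \cap L$ — again clause (2) applied inside $M_2$ gives ${\uparrow}_{M_2}\,a_1 \cap M_1 = {\uparrow}_{M_1}\,a_1$, and then one intersects with $L \subseteq M_1$.

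The first concrete step I would carry out is the reduction that lets orthogonality be read off correctly. The obstacle is that Lemma~\ref{lem:coatom} by itself says nothing about how the new atom $a_i$ sits relative to a previously added $a_j$; it only controls ${\uparrow}\,a_i$ inside the \emph{old} algebra $L$. The right way around this is to first apply Lemma~\ref{lem:2} to $L$, replacing $L$ by the \ts{oml} $M_0 = L \times \mc{P}(L)$ with the embedding $g(x)=(x,\{x\})$, and then to run the iterated coatom extension over the elements $g(x_1),\ldots,g(x_n)$ rather than over $x_1,\ldots,x_n$. The point of the $\mc{P}(L)$ factor is exactly property (2) of Lemma~\ref{lem:2}: it guarantees that $g(x_i) \leq \bigvee\{g(x_j)\}$ forces $i=j$, so the $g(x_i)$ are "independent" enough that the upsets ${\uparrow}\,g(x_i)$ in $M_0$ determine the orthogonality of the $x_i$ in a way that survives the construction. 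Concretely, I would show that in $M_0$ (hence in any $M$ containing it) one has $y \perp g(x_i)$ in $M$ with $y \leq $ [something] iff $x_i \perp \ldots$; the key identity to verify is that $a_i \perp a_j$ in the final $M$ iff $g(x_i) \leq g(x_j)'$ in $M_0$, which by Lemma~\ref{lem:2}(1) (applied with the $0$ case excluded since all $x_i \neq 0$) is equivalent to $x_i \perp x_j$ in $L$.

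Here is the order of the argument. First, apply Lemma~\ref{lem:2} to get $M_0 \supseteq L$ finite with the embedding $g$ having properties (1) and (2); set $y_i = g(x_i)$, which are distinct nonzero elements of $M_0$. Second, build a chain $M_0 \leq M_1 \leq \cdots \leq M_n =: M$ where $M_k$ is obtained from $M_{k-1}$ by Kalmbach coatom extension over the element $y_k$ (viewed in $M_{k-1}$; note $y_k$ is still nonzero there since $M_{k-1} \supseteq M_0$ as a subalgebra), producing a new atom $a_k \in M_k \setminus M_{k-1}$ with $a_k < y_k$ and ${\uparrow}_{M_k}\,a_k \cap M_{k-1} = {\uparrow}_{M_{k-1}}\,y_k$. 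Third, verify by downward induction using clause (3) that each $a_i$ remains an atom of $M$ and $a_i \notin L$ (indeed $a_i \notin M_0$, so $a_i \neq y_j$ for any $j$, as $y_j \in M_0$). Fourth, compute ${\uparrow}_M\,a_i \cap L$: by repeated use of clause (2) of Lemma~\ref{lem:coatom}, ${\uparrow}_M\,a_i \cap M_0 = {\uparrow}_{M_0}\,y_i$, and intersecting with $L$ and pulling back along $g$ gives ${\uparrow}_L\,x_i$ (here I use that $g$ is an order embedding and that $y_i \leq g(x)$ iff $x_i \leq x$, which follows from Lemma~\ref{lem:2}(2) with $A = \{x\}$). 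Fifth and last, read off orthogonality: $a_i \perp a_j$ means $a_i \leq a_j'$, and since $a_j < y_j$ gives $y_j' < a_j'$, together with the upset computation this is equivalent to $y_i \perp y_j$ in $M_0$, hence to $x_i \perp x_j$ in $L$ by Lemma~\ref{lem:2}(1). I expect the main obstacle to be the fourth and fifth steps — making the iterated intersection ${\uparrow}_M\,a_i \cap M_0 = {\uparrow}_{M_0}\,y_i$ rigorous through $n$ successive pastings, and pinning down the orthogonality of $a_i, a_j$ (which are "new" atoms living in different extension steps) rather than its behaviour relative to the old algebra; this is precisely where the $\mc{P}(L)$-padding from Lemma~\ref{lem:2} does the work, by forcing the $y_i$ to be in sufficiently general position that meets and orthocomplements in $M_0$ interact with the pasted atoms exactly as the naive picture suggests.
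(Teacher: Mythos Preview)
Your detour through Lemma~\ref{lem:2} is both unnecessary and actually breaks the statement you are trying to prove. The map $g(x)=(x,\{x\})$ of Lemma~\ref{lem:2} is not an \ts{oml} embedding: $g(L)$ is not closed under orthocomplementation in $L\times\mc{P}(L)$, so after your construction $L$ is not a subalgebra of the final $M$ and the conclusion ``$L\leq M$'' fails outright. Your step~4 also contains a slip: Lemma~\ref{lem:2}(2) with $A=\{x\}$ says $g(x_i)\leq g(x)$ iff $x_i\in\{x\}$, i.e.\ iff $x_i=x$, \emph{not} iff $x_i\leq x$ as you claim. Consequently ${\uparrow}_M\,a_i\cap g(L)$ collapses to the singleton $\{g(x_i)\}$ rather than to $g({\uparrow}_L\,x_i)$, and the upset condition of the lemma is lost. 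In short, the $\mc{P}(L)$-padding destroys exactly the two conclusions it was not designed to help with.

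The irony is that the bootstrap hidden in your step~5 is already the whole proof, applied directly over $L$ with no padding. This is what the paper does: iterate the coatom extension over $x_1,\ldots,x_n$ themselves, building $L=M_0\leq M_1\leq\cdots\leq M_n=M$ with a new atom $a_k\in M_k\setminus M_{k-1}$ under $x_k$ at each stage. For orthogonality, suppose $i<j$ and $a_i\perp a_j$ in $M$. Then $a_j\leq a_i'$; since $a_i'\in M_{j-1}$ and ${\uparrow}_{M_j}\,a_j\cap M_{j-1}={\uparrow}_{M_{j-1}}\,x_j$ by clause~(2) of Lemma~\ref{lem:coatom}, we get $x_j\leq a_i'$, hence $a_i\leq x_j'$. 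Now $x_j'\in L$, and iterating clause~(2) gives ${\uparrow}_M\,a_i\cap L={\uparrow}_L\,x_i$, so $x_i\leq x_j'$, i.e.\ $x_i\perp x_j$. The point you missed is that clause~(2) controls ${\uparrow}\,a_j$ relative to \emph{all} of $M_{j-1}$, which already contains the earlier $a_i$ and their complements---so the ``general position'' you sought from Lemma~\ref{lem:2} is simply not needed.
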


\begin{proof}
The proof is by induction on $n$. When $n=1$, use Lemma~\ref{lem:coatom}. Now suppose $x_1,\ldots,x_{n+1}\in L\setminus\{0\}$ are distinct. By the inductive hypothesis there is a finite \ts{oml} $M$ with $L\leq M$ and distinct atoms $a_1,\ldots,a_n\in M\setminus L$ with ${\uparrow}_M\,a_i\cap L={\uparrow}_L\,x_i$ and $a_i\perp a_j$ iff $x_i\perp x_j$ for $i,j\leq n$. Apply Lemma~\ref{lem:coatom} to $M$ using $e=x_{n+1}$ to obtain a finite \ts{oml} $N$ with $M\leq N$ and an atom $a_{n+1}\in N\setminus M$ with ${\uparrow}_N\,a_{n+1}\cap M={\uparrow}_M\,x_{n+1}$, and so by intersecting both sides of this equality with $L$ and using that $x_{n+1}\in L$, with ${\uparrow}_N\,a_{n+1}\cap L={\uparrow}_L\,x_{n+1}$. So we have a finite \ts{oml} $N$ with $L\leq N$ and distinct atoms $a_1,\ldots,a_{n+1}\in N\setminus L$ with ${\uparrow}_N\,a_i\cap L={\uparrow}_L\,x_i$ for each $i\leq n+1$. It remains to show the statements about orthogonality. 

For $i,j\leq n+1$, if $x_i\perp x_j$, then since $a_i\leq x_i$ and $a_j\leq x_j$, we have $a_i\perp a_j$. It remains to show that if $i,j\leq n+1$ and $a_i\perp a_j$, then $x_i\perp x_j$. If $i,j\leq n$, this is given by our inductive hypothesis. Suppose for the remaining case that $i\leq n$ and $a_i\perp a_{n+1}$. Then $a_{n+1}\leq a_i'$. Since $a_i'\in M$ and ${\uparrow}_N\,a_{n+1}\cap M={\uparrow}_M\,x_{n+1}$, we have $x_{n+1}\leq a_i'$. Then $a_i\leq x_{n+1}'$, and since $x_{n+1}'\in L$ and ${\uparrow}_N\,a_i\cap L={\uparrow}_L\, x_i$ we have $x_i\leq x_{n+1}'$, hence $x_i\perp x_{n+1}$.
\end{proof}

\begin{lem} \label{lem: stu}
If $G$ is a finite loopless undirected graph, then there is a finite \ts{oml} $L$ and a one-one map $f:G\to L\setminus\{0\}$ with $x\perp y$ iff $f(x)\perp f(y)$ for all $x,y\in G$.      
\end{lem}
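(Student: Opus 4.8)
The plan is to prove Lemma~\ref{lem: stu} by induction on the number $n$ of vertices of $G$, the inductive step combining the ``join‑independence'' supplied by Lemma~\ref{lem:2} with a single application of the coatom extension of Lemma~\ref{lem:coatom}; Theorem~\ref{thm:Tao} is not needed. The base cases $n\le 1$ are immediate: take $L$ to be the two‑element \ts{oml} and, when $n=1$, send the unique vertex to $1$, which respects looplessness since $1\not\le 1^{\perp}$.

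For the inductive step, let $G$ have vertices $v_1,\dots,v_{n+1}$ and let $G'$ be the induced subgraph on $v_1,\dots,v_n$. By the inductive hypothesis there is a finite \ts{oml} $L'$ and a one‑one $f'\colon G'\to L'\setminus\{0\}$ with $v_i\perp v_j\Leftrightarrow f'(v_i)\perp f'(v_j)$. Apply Lemma~\ref{lem:2} to $L'$ to obtain a finite \ts{oml} $L''$ and a one‑one $g\colon L'\to L''$ that preserves and reflects orthogonality of nonzero elements and satisfies $g(x)\le\bigvee g(A)\Leftrightarrow x\in A$; set $f''=g\circ f'$. Then $f''$ still realizes $G'$ inside $L''$, and, writing $S=\{v_i : v_i\perp v_{n+1}\text{ in }G\}\subseteq\{v_1,\dots,v_n\}$ and $t=\bigvee_{v_i\in S}f''(v_i)$, the join‑reflection clause for $g$ together with injectivity of $f'$ gives, for all $j\le n$,
\[
f''(v_j)\le t \iff v_j\in S .
\]

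Now I would realize $v_{n+1}$ as a fresh atom below $t^{\perp}$. First, $t^{\perp}\in L''\setminus\{0\}$: since $L'$ has more than $n$ elements (it contains $0$ together with the $n$ distinct nonzero values of $f'$), there is $x\in L'\setminus\{f'(v_1),\dots,f'(v_n)\}$, so $g(x)\not\le\bigvee_{i\le n}f''(v_i)$ by join‑reflection, whence $\bigvee_{i\le n}f''(v_i)\neq 1$ and therefore $t\neq 1$. Apply Lemma~\ref{lem:coatom} to $L''$ with $e=t^{\perp}$ to get a finite \ts{oml} $M\supseteq L''$ and an atom $a\in M\setminus L''$ with $a<t^{\perp}$ and ${\uparrow}_M\,a\cap L''={\uparrow}_{L''}\,t^{\perp}$. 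Put $L=M$, keep $f(v_i)=f''(v_i)$ for $i\le n$, and set $f(v_{n+1})=a$. Since $a$ is an atom outside $L''$, the map $f$ is one‑one into $L\setminus\{0\}$, and orthogonalities among $f(v_1),\dots,f(v_n)$ are unchanged because $L''$ is a sub‑\ts{oml} of $M$. For the new vertex: if $v_i\in S$ then $a<t^{\perp}\le f''(v_i)^{\perp}$, so $a\perp f''(v_i)$; conversely, if $a\le f''(v_i)^{\perp}$ then $f''(v_i)^{\perp}\in{\uparrow}_M\,a\cap L''={\uparrow}_{L''}\,t^{\perp}$, so $t^{\perp}\le f''(v_i)^{\perp}$, i.e. $f''(v_i)\le t$, hence $v_i\in S$ by the displayed equivalence. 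Thus $a\perp f''(v_i)\Leftrightarrow v_i\perp v_{n+1}$, completing the induction.

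The only genuinely delicate point is the one the construction is built around: controlling orthogonality to the newly adjoined vertex. Placing the new atom $a$ below $t^{\perp}$ makes it orthogonal to all of its intended neighbors for free, but to guarantee it is \emph{not} orthogonal to a non‑neighbor one must know that no $f''(v_j)$ with $v_j\notin S$ lies below $t$, and it is precisely for this that the join‑reflection clause of Lemma~\ref{lem:2} is essential — and essential afresh at each stage, which is why we re‑invoke Lemma~\ref{lem:2} rather than thread a strengthened hypothesis through the induction. Everything else (that $g$ reflects orthogonality of nonzero elements, that $t\neq1$, that a sub‑\ts{oml} inherits the orthogonality relation) is routine. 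Once Lemma~\ref{lem: stu} is available, Theorem~\ref{thm:part 2 main} follows by applying Lemma~\ref{lem: bigcoatom} to the distinct nonzero elements $f(v_1),\dots,f(v_n)$ of the finite \ts{oml} $L$, producing a finite \ts{oml} with atoms $a_1,\dots,a_n$ such that $a_i\perp a_j\Leftrightarrow f(v_i)\perp f(v_j)\Leftrightarrow v_i\perp v_j$, so $v_i\mapsto a_i$ strongly embeds $G$ into its orthogonality graph.
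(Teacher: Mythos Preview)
Your proof is correct and follows essentially the same route as the paper's: induct on $|G|$, pass the inductive data through Lemma~\ref{lem:2} to obtain join-reflection, then apply the coatom extension of Lemma~\ref{lem:coatom} with $e=t^\perp$ to realize the new vertex as a fresh atom, checking orthogonality to the old vertices via ${\uparrow}_M a\cap L''={\uparrow}_{L''}t^\perp$. The only (cosmetic) difference is that the paper first disposes of the complete-graph case separately and then picks the removed vertex $w$ to be non-adjacent to something so that $P\neq H$ forces $e\neq 0$, whereas you treat all cases uniformly by using $0\in L'\setminus f'(G')$ together with join-reflection to get $t\neq 1$.
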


\begin{proof}
By induction on $|G|=n$. For $n=1$ we can map the single vertex of $G$ to the top element of the 2-element Boolean algebra. Suppose $|G|=n+1$. If $G$ is a complete graph, then we can map $G$ to the atoms of a finite Boolean algebra with $n+1$ atoms. So assume there is a vertex $w\in G$ that is not adjacent to some other vertex. Let $H=G\setminus\{w\}$ and let $P$ be the set of vertices adjacent to $w$, all of which belong to $H$, and $Q=H\setminus P$. 
\vspace{2ex}

\begin{center}
\begin{tikzpicture}
\draw (0,0) circle(1.35);    
\draw (100:1.35) to[out=-50,in=50] (260:1.35);
\fill (-.5,.8) circle(.06);
\fill (-.3,.4) circle(.06);
\fill (-.3,-.8) circle(.06);
\fill (-.8,-.45) circle(.06);
\node at (-.65,0) {$P$};
\node at (.7,0) {$Q$};
\node at (-5,0) {$w$};
\fill (-4.5,0) circle(.06);
\draw (-4.5,0) to[out=30,in=160] (-.5,.8);
\draw (-4.5,0) to[out=15,in=170] (-.3,.4);
\draw (-4.5,0) to[out=-15,in=190] (-.8,-.45);
\draw (-4.5,0) to[out=-30,in=210] (-.3,-.8);
\node at (1.75,-.6) {$H$};
\draw (-6,-2)--(-6,2)--(3,2)--(3,-2)--(-6,-2);
\node at (-6.5,0) {$G$};
\end{tikzpicture}    
\end{center}
\vspace{2ex}

Since the subgraph $H$ of $G$ has $n$ elements, the inductive hypothesis gives a finite \ts{oml} $L$ and a one-one mapping $h:H\to L\setminus\{0\}$ with $x\perp y$ iff $h(x)\perp h(y)$. By Lemma~\ref{lem:2} there is a finite \ts{oml} $M$ and a one-one map $g:L\to M$ such that $u\perp v$ iff $g(u)\perp g(v)$ for all $u,v\in L\setminus\{0\}$ and with $g(u)\leq\bigvee g(A)$ iff $u\in A$ for all $u\in A$ and $A\subseteq L$. Then the composite $k=g\circ h$ satisfies $x\perp y$ iff $k(x)\perp k(y)$ for all $x,y\in H$ and $k(x)\leq \bigvee k(A)$ iff $x\in A$ for all $x\in H$ and $A\subseteq H$. 

Since we have assumed that $w$ is not adjacent to some vertex in $G$, we have that $P\neq H$. Let $e\in M$ be such $e'=\bigvee k(P)$. For all $x\in H$ we have $k(x)\leq\bigvee k(P)$ iff $x\in P$, therefore $e'\neq 1$, and hence $e\neq 0$. Apply the coatom extension Lemma~\ref{lem:coatom} to $M$ with the element $e\in M\setminus\{0\}$ to obtain a finite \ts{oml} $N$ with $M\leq N$ and $a\in N\setminus M$ such that $a$ is an atom of $M$ and ${\uparrow}_M\, a\cap N={\uparrow}_M\,e$. Thus for $x\in H$ we have $x\in P$ iff $k(x)\leq e'$ iff $e\leq k(x)'$, and since $k(x)'\in M$, this occurs iff $a\leq k(x)'$, hence iff $k(x)\perp a$. Define $f:G\to N$ by 
\[
f(x)\,\,=\,\,\begin{cases}
    \,k(x)&\mbox{if $x\in H$}\\
    \,\,\,\,a&\mbox{if $x=w$}
\end{cases}
\]
Since $a\not\in M$ and $k$ is one-one, $f$ is one-one. Since $M\leq N$, if $x,y\in H$ and $x\perp y$ iff $k(x)\perp k(y)$, for $x,y\in H$ we have $x\perp y$ iff $f(x)\perp f(y)$. Finally, for $x\in H$, by the definition of $P$ we have $x\perp w$ iff $x\in P$, hence iff $k(x)\leq e'$, and thus iff $f(x)=k(x)\perp a=f(w)$. 
\end{proof}

We return to the proof of Theorem~\ref{thm:part 2 main}. Suppose $G$ is a finite graph with vertices $x_1,\ldots,x_n$. By Lemma~\ref{lem: stu} there is a finite \ts{oml} $L$ and a one-one mapping $f:G\to L\setminus\{0\}$ such that $x_i\perp x_j$ iff $f(x_i)\perp f(x_j)$ for all $i,j\leq n$. By Lemma~\ref{lem: bigcoatom} there is a finite \ts{oml} $M$ and distinct atoms $a_1,\ldots,a_n\in M\setminus L$ with $a_i\perp a_j$ iff $f(x_i)\perp f(x_j)$ for all $i,j\leq n$. Let $h:G\to M\setminus\{0\}$ be given by setting $h(x_i)$ to be the atom $a_i$ of $M$. Then $f$ is one-one and $x_i\perp x_j$ iff $h(x_i)\perp h(x_j)$. 

\begin{cor}
Every graph can be strongly embedded into the orthogonality graph of an atomic \ts{oml}.    
\end{cor}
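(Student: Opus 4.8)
The plan is to reduce the case of an arbitrary graph $G$ to the finite case already settled in Theorem~\ref{thm:part 2 main}, by an ``amalgamation'' or ``union over finite subgraphs'' argument. The natural approach is to realize $G$ as a directed union of its finite induced subgraphs and to patch together the finite \ts{oml}s produced by Theorem~\ref{thm:part 2 main}, but naively this does not work because the embeddings into different finite \ts{oml}s need not be compatible. A cleaner route is a single direct construction in the spirit of Lemma~\ref{lem: stu}: take the atoms of $G$ to index coordinates, and build an atomic \ts{oml} whose orthogonality relation on a designated family of atoms reproduces adjacency in $G$.

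Concretely, first I would enumerate the vertices of $G$ as $(v_i)_{i\in I}$ for an arbitrary index set $I$. The key observation is that the construction underlying Theorem~\ref{thm:part 2 main} — which composes Lemmas~\ref{lem:2}, \ref{lem:coatom}, \ref{lem: bigcoatom}, and \ref{lem: stu} — is inherently finitary: it handles the vertices one at a time via coatom extensions. I would run the same induction transfinitely. At stage $i$, having built a (not necessarily finite, but atomic) \ts{oml} $L_i$ together with atoms realizing the adjacency relation on the vertices processed so far, I introduce $v_i$ by applying the coatom-extension machinery of Lemma~\ref{lem:coatom} and Lemma~\ref{lem: bigcoatom} relative to the join of the images of the neighbors of $v_i$ already present. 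At limit stages I take directed unions of \ts{oml}s, which remain \ts{oml}s and remain atomic provided the chain embeddings preserve atoms — which the coatom extension does by clause (3) of Lemma~\ref{lem:coatom}. The upshot is an atomic \ts{oml} $L$ and a one-one map $f:G\to L\setminus\{0\}$ landing in the atoms of $L$ with $x\perp y \Leftrightarrow f(x)\perp f(y)$, i.e.\ a strong embedding into the orthogonality graph of $L$.

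The main obstacle I expect is bookkeeping at limit stages: one must check that a directed union of the \ts{oml}s arising from iterated coatom extensions and products with powerset Boolean algebras is again an \ts{oml} (this is standard, since the \ts{oml} axioms are finitary and the bonding maps are algebra embeddings preserving $0$, $1$, $\vee$, $\wedge$, and $'$), and that atoms introduced early are not ``absorbed'' — i.e.\ remain atoms — after later extensions. The atom-preservation clause (3) of Lemma~\ref{lem:coatom} is exactly what guarantees this along successor steps, and it passes through directed unions because being an atom is preserved by the embeddings. One also needs, at stage $i$, that the element $e$ below which the new atom sits is nonzero; this requires that the join of the images of the neighbors of $v_i$ not be the top element, which follows from property (2) of Lemma~\ref{lem:2} applied at an earlier stage exactly as in the proof of Lemma~\ref{lem: stu}. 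Alternatively, if one prefers to avoid the transfinite induction entirely, one can first strongly embed $G$ into the orthogonality graph of a Hilbert space using the countable case of Tau and Tserunyan for countable $G$ (or the Hilbert-space sum construction for larger $G$ via Theorem~\ref{thm:Tao} applied to finite subgraphs), and then invoke that the projection lattice of a Hilbert space is an atomic \ts{oml}; this gives a one-line deduction from the already-recorded Hilbert-space embedding results, at the cost of not using the finite \ts{oml} theorem. I would state the corollary with the transfinite argument sketched briefly, since it keeps the paper self-contained and emphasizes that \ts{oml}s, not just Hilbert spaces, suffice.
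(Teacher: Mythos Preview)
Your proposal diverges from the paper's proof and, as written, has real gaps.

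The paper does not run a transfinite induction at all. It applies the compactness theorem of first-order logic: set up a language with constants $c_v$ for each vertex $v$, together with the \ts{oml} operations, write down the \ts{oml} axioms, the (first-order) atomicity axiom, sentences forcing each $c_v$ to be an atom, and for each pair $u,v$ the sentence asserting $c_u\perp c_v$ or its negation according to adjacency in $G$. Every finite subset of this theory is satisfiable by Theorem~\ref{thm:part 2 main} (or by Tau--Tserunyan), so compactness yields a model, i.e.\ an atomic \ts{oml} into whose orthogonality graph $G$ strongly embeds. This sidesteps entirely the bookkeeping you worry about.

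Your transfinite route is not merely bookkeeping. The step ``apply the coatom extension relative to the join of the images of the neighbors of $v_i$ already present'' requires that this join exist, and once the chain of \ts{oml}s is infinite that join may be an \emph{infinite} join in a non-complete \ts{oml}. Lemma~\ref{lem:2} as stated is for finite $L$; its natural extension to infinite $L$ still needs $\bigvee A$ to exist in the first coordinate $L$, not just in the powerset factor. Directed unions at limit stages are not complete, so you cannot simply assume completeness is carried along. None of this is addressed in your sketch, and it is the genuine obstruction, not atom-preservation.

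Your alternative route through Hilbert spaces is incorrect for arbitrary $G$: the main theorem of Tau--Tserunyan recorded in the paper is precisely that there exists an uncountable (bipartite) graph that cannot be strongly embedded into the orthogonality graph of \emph{any} Hilbert space. So ``the Hilbert-space sum construction for larger $G$ via Theorem~\ref{thm:Tao} applied to finite subgraphs'' fails in general, and the one-line deduction you suggest is unavailable exactly when it is needed.
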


\begin{proof}
We use the fact that every finite graph can be strongly embedded into the orthogonality graph of an \ts{oml}. This is provided either by the result of Tao and Tserunyan or by Theorem~\ref{thm:part 2 main}. Suppose $G$ is a graph. Construct a first-order language having a constant symbol $c_v$ for each vertex $v$ of $G$, a binary predicate $E$ for edges, constants $0,1$, a unary operation symbol $'$ and binary operation symbols $\wedge$ and $\vee$. We consider a set $\Sigma$ of first-order sentences that include sentences for the \ts{oml} axioms and a sentence saying that each non-zero element has an atom beneath it (note that the order $\leq$ of a lattice can be expressed in a first-order way from its meet operation). For each vertex $v$ we include the sentence saying that $c_v$ is an atom, and for each pair of vertices $u,v$ of $G$ we include the sentence $E(c_u,c_v)$ if $u,v$ are adjacent in $G$ and we include the sentence $\neg E(c_u,c_v)$ if $u,v$ are not adjacent in $G$. Since every finite graph can be strongly embedded into the orthogonality graph of an atomic \ts{oml}, every finite subset of $\Sigma$ has a model. So by the compactness theorem, $\Sigma$ has a model, hence $G$ can be strongly embedded into the orthogonality graph of an atomic $\ts{oml}$.     
\end{proof}

\end{document}